\documentclass[lettersize,journal]{IEEEtran}

\pdfoutput=1 
\usepackage{hyperref}
\usepackage{graphicx}
\usepackage[all]{hypcap} 
\usepackage{color,overpic,cite}
\usepackage{xcolor}
\usepackage{pdfsync}
\usepackage{url}
\usepackage{amsmath,amsfonts,amssymb,amsthm,bbm,bm,comment,multirow, subfigure}
\usepackage{tcolorbox}
\usepackage{siunitx}
\usepackage{mdframed}
\usepackage{bm}
\usepackage[ruled,commentsnumbered]{algorithm2e}
\usepackage{tabularx,booktabs}
\usepackage[normalem]{ulem}
\usepackage{setspace}

\newtheorem{lemma}{Lemma}

\def\R{\mathcal{R}}
\def\X{\mathcal{X}}
\def\bE{\mathbb E}
\def\Y{\mathcal{Y}}

\setlength{\abovedisplayskip}{0.3cm}
\setlength{\belowdisplayskip}{0.3cm}

\newcommand{\revmajor}[1]{{\color{black}#1}} 
\newcommand{\revminor}[1]{{\color{black}#1}}

\begin{document}

\title{Adaptive Resource Allocation for Virtualized Base Stations in O-RAN with Online Learning}

\author{Michail~Kalntis,~\IEEEmembership{Graduate Student Member,~IEEE,}
        George~Iosifidis,~\IEEEmembership{Member,~IEEE,}
        and~Fernando~A.~Kuipers,~\IEEEmembership{Senior Member,~IEEE}
\thanks{The authors are with Delft University of Technology (emails: \{m.kalntis, g.iosifidis, f.a.kuipers\}@tudelft.nl).}}


\markboth{IEEE Transactions on Communications}%
{Kalntis,
\MakeLowercase{\textit{(et al.)}}:
Adaptive Resource Allocation for Virtualized Base Stations in O-RAN with Online Learning}

\maketitle

\begin{abstract}

Open RAN systems, with their virtualized base stations (vBSs), offer increased flexibility and reduced costs, vendor diversity, and interoperability. \revmajor{However, optimizing the allocation of radio resources in such systems raises new challenges due to the volatile vBSs operation, and the dynamic network conditions and user demands they are called to support. Leveraging the novel O-RAN multi-tier control architecture, we propose a new set of resource allocation \emph{threshold policies} with the aim of balancing the vBSs' performance and energy consumption in a robust and provably optimal fashion. To that end, we introduce an online learning algorithm that operates under minimal assumptions and without requiring knowledge of the environment, hence being suitable even for ``challenging'' environments with non-stationary or adversarial demands and conditions. We also develop a meta-learning scheme that utilizes other available algorithmic schemes, e.g., tailored for more ``easy'' environments, by choosing dynamically the best-performing algorithm; thus enhancing the system's effectiveness. We prove that the proposed solutions achieve sub-linear regret (zero optimality gap), and characterize their dependence on the main system parameters. The performance of the algorithms is evaluated with real-world data from a testbed, in stationary and adversarial conditions, indicating energy savings of up to \SI{64.5}{\percent} compared with several state-of-the-art benchmarks.}

\end{abstract}

\begin{IEEEkeywords}
O-RAN, Online Learning, Bandit Feedback, Network Optimization, Virtualization, Expert Advice.
\end{IEEEkeywords}

\IEEEpeerreviewmaketitle

\section{Introduction}\label{sec:intro}

\subsection{Motivation \& Background}

\IEEEPARstart{T}{he} importance of base station virtualization is best illustrated by the flurry of industrial and academic activities that focus on the development of virtualized and Open Radio Access Network (O-RAN) architectures \cite{o-ran-andres}. The O-RAN Alliance, for example, is a global initiative aiming to softwarize and standardize RANs so as to improve their performance, reduce their costs, and lower the entry barrier towards a wider vendor ecosystem. At the core of this transformation are the virtualized Base Stations (vBSs), such as srsLTE \cite{gomez2016srslte} and OpenAirInterface (OAI) \cite{oai}, which offer OPEX/CAPEX savings and performance gains, since their operational parameters can be adjusted with high granularity at runtime \textcolor{black}{\cite{Alnoman_survey18}}. Alas, these benefits come at a cost. Softwarized base stations are found to have less predictable performance and more volatile energy consumption \cite{rost-globecom15, jose-icc21, ayala2021bayesian}, an {issue} that is amplified when instantiating them in general-purpose computing infrastructure. This induces operation and cost uncertainties at times when there is an increased need for \textcolor{black}{robustness} and performance guarantees in mobile networks. Therefore, it becomes imperative to understand how to configure or schedule these vBSs \textcolor{black}{(i.e., how to allocate their resources) without relying on strong assumptions or compromising network performance, in order to unblock their deployment and maintain energy costs at sustainable levels.}

\revmajor{The O-RAN architecture offers new opportunities to achieve this goal. Namely, the emerging O-RAN standards \cite{oran-spec-arch, oran-spec-scenarios} have provisions for multi-tier control solutions for resource management that can be implemented centrally, i.e., by the RAN Intelligent Controller (RIC), and enforced at different time-scales. In particular, our focus here is on non-Real-Time (non-RT) policies that determine the operation envelope (or resource allocation \emph{thresholds}) of the vBSs over time intervals (rounds) of a few seconds. These policies are fed to, and enforced by, the real-time radio scheduler of each vBS, which devises their assignments subject to global rules about, e.g., the maximum transmission power, the eligible modulation and coding schemes (MCS), and so on. Such centralized threshold policies have been recently introduced, e.g., see \cite{vrain_conf, ayala2021bayesian, edgebol21}, and have several practical advantages. First, O-RAN includes heterogeneous base stations that are challenging, if not impossible, to configure directly by intervening with their real-time schedulers. The global non-RT policies, on the other hand, offer an easy path to shape the operation of each vBS.  Secondly, using such central policies, the O-RAN controllers can coordinate the operation of their vBSs in a unified fashion, managing jointly their resources, and also use AI/ML mechanisms that can benefit from this centralized view.}

\revmajor{Nevertheless, the effective design of such policies is a new and particularly intricate problem. Due to their coarse time scale (seconds) and unlike the typical Radio Resource Management (RRM) decisions (updated in msecs), these policies do not have access to the network conditions and user traffic that will be realized during the interval they will be applied. And, further, these parameters can change arbitrarily during such large time windows, not necessarily following a stationary distribution. Moreover, due to the heterogeneity and volatile operation of the vBSs, the effect of such policies on the KPIs of interest is challenging, if not impossible, to predict or quantify with analytical expressions. Coupled with the typically large number of possible policies, this compounds finding the optimal policy for each vBS. In light of these observations, it is not surprising that the first works in this area focused on O-RAN operations under static network conditions and demands, \cite{ayala2021bayesian, edgebol21}.}

\revmajor{Our work addresses the following question: \emph{how to design robust vBS non-RT policies that offer performance/cost guarantees {without relying on strong assumptions} and avoid sub-optimal operation points?}  We consider O-RAN policies that determine thresholds (upper bounds) for key vBS operation knobs, namely for the vBS transmission power, the eligible MCS, and the Physical Resource Blocks (PRB), in the Uplink (UL) and Downlink (DL). Each policy is updated at a non-RT scale, based on the performance, cost, and context (conditions and demands) observations of the past, and is subsequently fed to real-time schedulers that assign the vBS radio resources.  }

\subsection{Contributions}
Our \emph{first contribution} is the design and evaluation of a robust \textcolor{black}{\emph{adversarial} bandit algorithm, cf. \cite{kalntis22}}, which: \emph{(i)} identifies effective policies without relying on assumptions \textcolor{black}{about the environment}; \emph{(ii)} offers \textcolor{black}{tight} performance guarantees; \emph{(iii)} is oblivious to the (unknown and possibly time-varying) vBS performance; and \emph{(iv)} has minimal and constant (in observations and time) memory requirements, as it uses closed-form expressions that can be calculated even in real-time and in resource-constrained platforms. The performance is quantified using a combined metric of effective throughput modulated by the \textcolor{black}{traffic} demands, and energy consumption, where the latter can be prioritized via a weight parameter. 
\revminor{It is important to note that no assumption (e.g., convexity) is made on the performance function (i.e., we follow a black-box approach).}
For the optimality criterion, we use \emph{regret}, where we compare the time-aggregated performance of the algorithm with that of a hypothetical benchmark that is designed with the help of an oracle providing access to all future/necessary information.

The \emph{second contribution} is the expansion of this learning algorithm with a \emph{meta-learning} scheme, which boosts the performance whenever possible. Namely, the robustness of the algorithm described above means it might be conservative when \textcolor{black}{the environment is \textit{easy}, e.g.,} when the network has access to context information, or if the \textcolor{black}{channel qualities} and \textcolor{black}{traffic} demands are stationary or exhibit periodicity \cite{marquez2020identifying}. For these cases, data-efficient solutions such as \cite{ayala2021bayesian} can leverage the available information to identify optimal policies faster. Hence, the question that arises naturally is how to combine the required robustness without compromising learning performance (in terms of convergence speed) whenever the \textcolor{black}{environment is easy}. To address this, we introduce a \emph{meta-learner} that selects intelligently among policies proposed by different algorithms that rely on, and perform better under, different assumptions. A key challenge is that the learning happens on two levels: the meta-learner has to learn which is the best-performing algorithm, and each algorithm has to learn which is the best-performing policy, while partial (i.e., bandit) feedback is received on both levels. Our approach addresses this challenge through a framework that guarantees the network will perform as well as the best-performing algorithm.

In summary, the main technical contributions of this paper are the following: 

\noindent $\bullet$ We study the vBS \textcolor{black}{resource allocation} problem in its most general form, i.e., in non-stationary/adversarial \textcolor{black}{environment} and without knowledge of vBS throughput/cost functions. Our proposed scheme achieves sub-linear regret and has minimal computation and memory overhead \textcolor{black}{\cite{kalntis22}}. This is the first work applying \emph{adversarial} bandits to vBS \textcolor{black}{resource allocation}.

\noindent $\bullet$ We devise a meta-learning strategy that entails the use of algorithms tailored to different environments and obtains sub-linear regret with respect to the best \textcolor{black}{algorithm}, in each case.

\noindent $\bullet$ We use real-world traffic traces and testbed measurements to demonstrate the weaknesses of prior works \cite{ayala2021bayesian}, as well as the efficacy of the proposed learning algorithm in a battery of representative scenarios. Upon publication of this article, we will release all the source code to foster further research on this important topic.

\subsection{Organization}

Sec. \ref{sec:related} discusses the related work, and Sec. \ref{sec:system_model} introduces the O-RAN background and the system model. In Sec. \ref{sec:configuration_learning_for_adversarial_environments}, we present the main learning algorithm, and in Sec. \ref{sec:universal_configuration_learning_through_an_experts_model}, the meta-learner. Sec. \ref{sec:evaluation} illustrates the performance evaluation and we conclude in Sec. \ref{sec:conclusions}. 
\section{Related Work}\label{sec:related}

\begin{figure*}[!t]
	\centering
	\subfigure[]{\label{fig:architecture_leftleft}\includegraphics[scale=0.36]{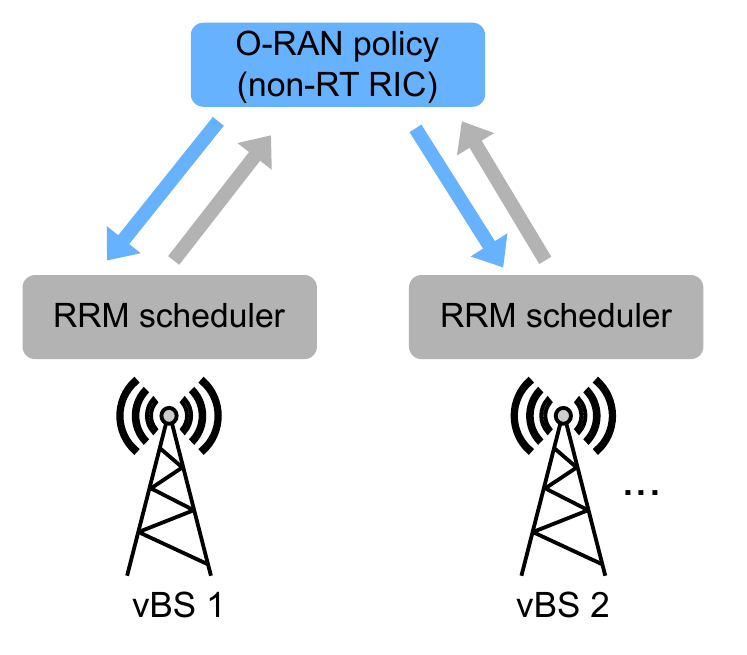}}	
	\subfigure[]{\label{fig:architecture_left}\includegraphics[scale=0.21]{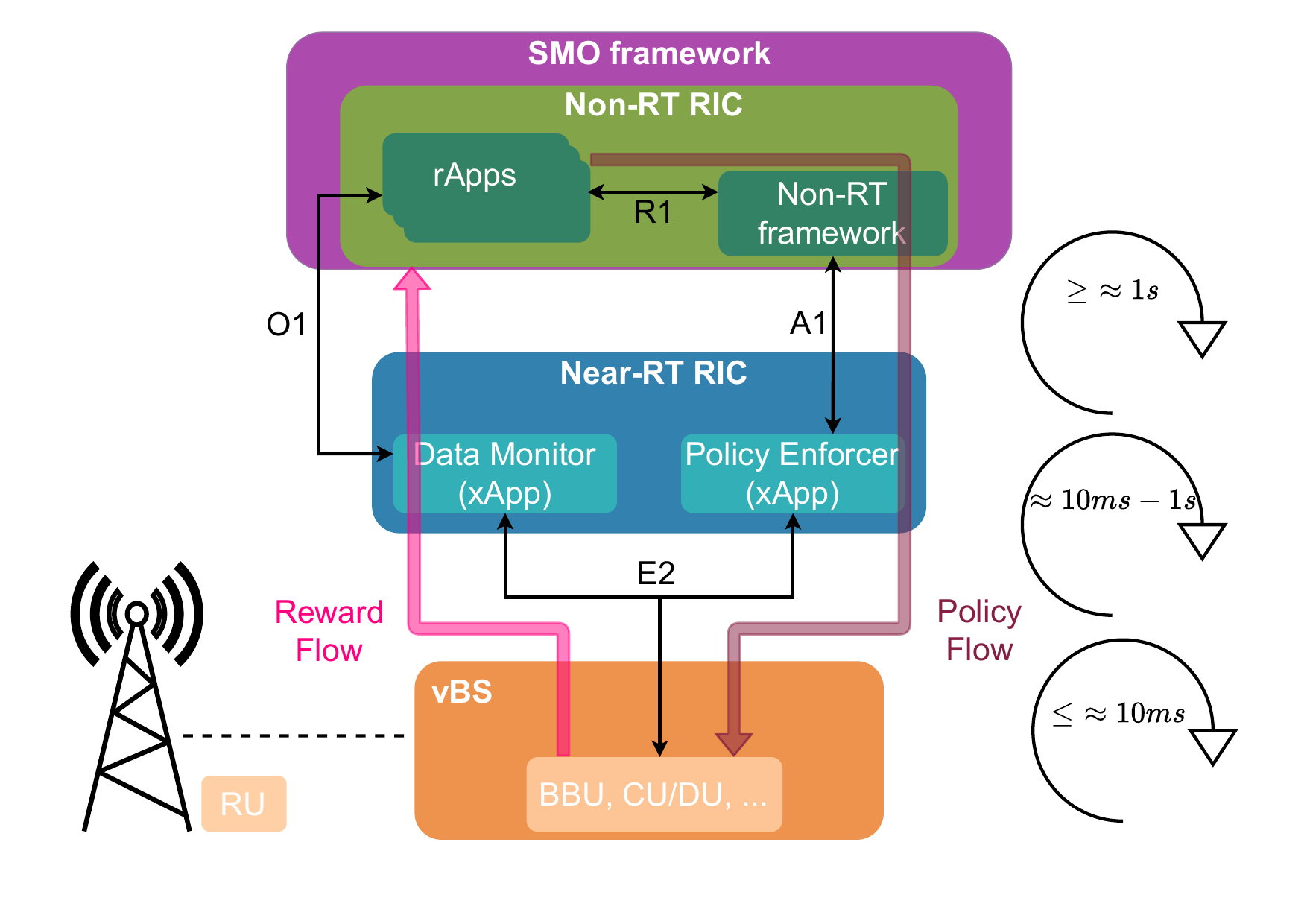}}
	\subfigure[]{\label{fig:architecture_right}\includegraphics[scale=0.21]{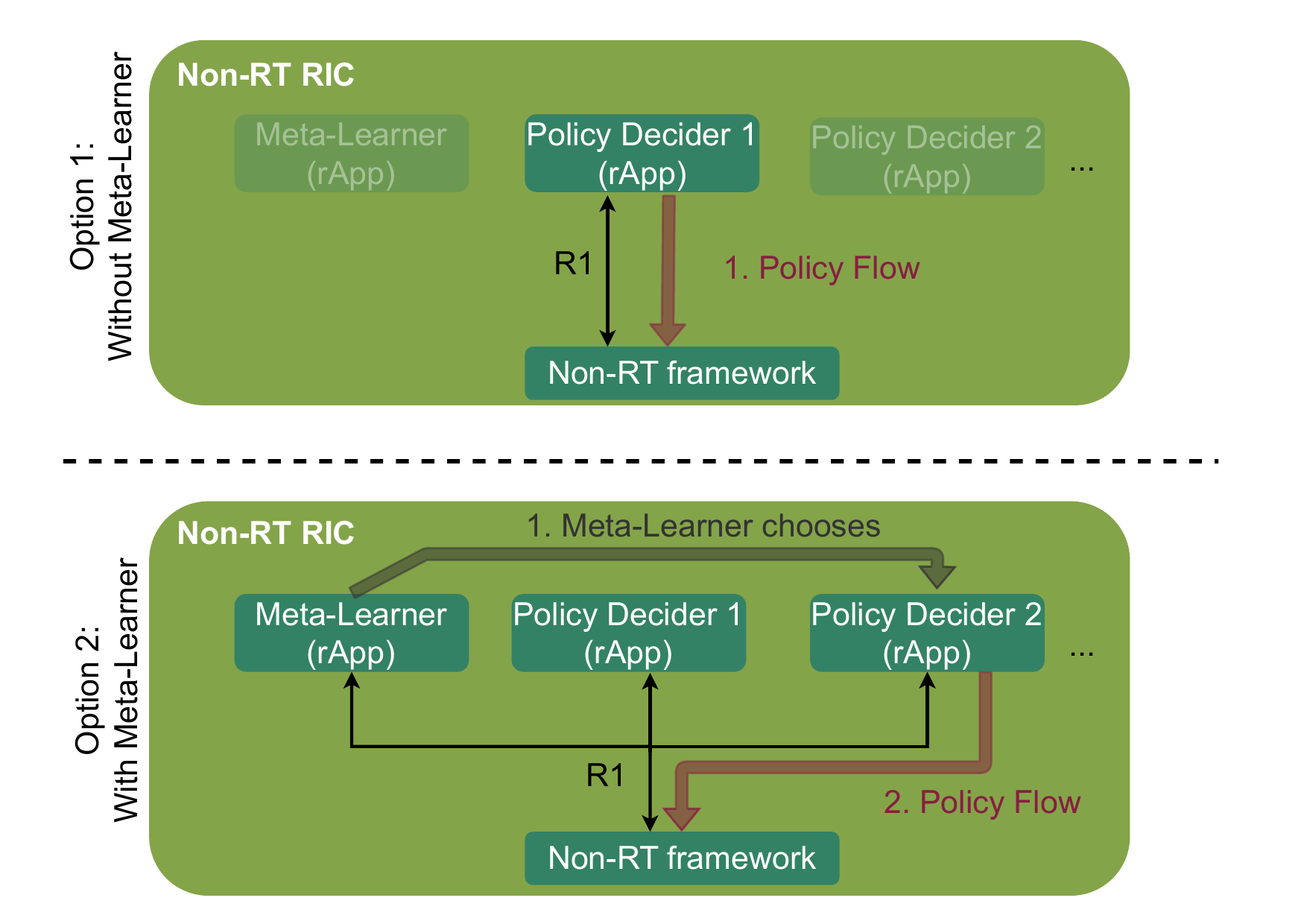}}
	\caption{\revmajor{O-RAN-compliant  architecture \& policy workflow.  \textbf{(a)}. The proposed policy operates in the non-RT RIC and decides MCS, Power and PRB thresholds that are sent to each vBS's scheduler.} \textbf{(b)} The key building block is the Non-RT RIC, hosted by the Service Management and Orchestration (SMO) framework, and the Near-RT RIC. The system has three control loops: (\emph{i}) Non-RT, which involves large-timescale operations with execution time \SI{\geq 1}{\second}, (\emph{ii}) Near-RT (\SI{>10}{\milli\second}), and (\emph{iii}) RT (\SI{\leq 10}{\milli\second}). \textcolor{black}{\textbf{(b)} Policy Flow for the Non-RT RIC with (bottom) and without (top) an rApp implementing a meta-learner.}}
	\label{fig:architecture}
\end{figure*}

\noindent \revmajor{
Resource management for \emph{softwarized} networks can be broadly classified into models that relate {policies} to performance functions, model-free approaches, and Reinforcement Learning (RL) techniques. Model-based examples include \cite{rost-globecom15} and \cite{bega2018cares}, which maximize the served traffic subject to vBS computing capacity, but do not capture the impact that the hosting platform, the environment, or user demands may have on the vBS's operation \cite{jose-icc21}. Model-free approaches employ Neural Networks to approximate the performance functions of interest \cite{patras-DNN-Tutorial2019}, yet, their efficacy depends on the availability of representative training data. Finally, RL solutions \cite{DL_wireless19} use runtime observations and have been used, for example, in interference management \cite{alcaraz2020online}, vBS function splitting \cite{murti21}, and handover optimization \cite{handover_cao22}.
The disadvantages of all these works are the curse of dimensionality and the lack of robust convergence guarantees \cite{RL_curse}. Following an akin approach, contextual bandit algorithms are employed to optimize video streaming rates \cite{bastian-cba-infocom19} or handover decisions \cite{chuai-collab-INFOCOM19}; assign CPU time to virtualized BSs  \cite{vrain_conf}, and control millimeter-Wave networks \cite{tekin-TCCN2020}. Unfortunately, these works require access to \emph{context} information. More recently, Bayesian learning has been used for RRM, see \cite{lorenzo_2021} and references therein, but these solutions also need access to context information and converge only under stationary conditions. 

We take here a different approach, based on \emph{adversarial} bandits, cf. \cite{bubeck_bandits}, which is robust to adversarial or non-stationary contexts (channel qualities and traffic demands), and has low memory and computation requirements. This latter feature is in stark contrast to RL (with sizeable memory space required to store all space-actions combinations) and Bayesian techniques \cite{ayala2021bayesian, lorenzo_2021} which involve slow matrix inversions \cite{freitas-tutorial-Proc2016}. Such adversarial/non-stationary environments are increasingly common due to highly volatile network conditions \cite{abdul_nonstat_cqi} and traffic demands \cite{paschos-comm-mag}. Furthermore, we draw ideas from the expert-learning paradigm \cite{expert-littlestone} and enrich our policy decisions with a meta-learning scheme that combines our adversarial learning algorithm (that can be at times conservative) with any other algorithm (e.g., \cite{ayala2021bayesian}) that performs better on more \emph{easy} scenarios where the environment is predictable (e.g., when stationary). This meta-algorithm obtains the best of both approaches, and succeeds in being both fast-learning and robust; an idea that has been used in online learning \cite{NIPS2014_01f78be6}, but not in network management.

We employ the above method to tackle a joint performance and energy cost optimization problem. Similar (in scope) formulations have been extensively studied in the literature. For instance, \cite{multi-obj-2}  considered a joint user association, spectrum, and power allocation model for throughput optimization; \cite{multi-obj-1} focused on spectrum and energy efficient beamforming; and \cite{multi-obj-3} optimized the spectrum and power assignment using genetic algorithms. Nevertheless, such approaches assume the system and user-related parameters to be fixed and known. On the other hand, many dynamic formulations rely on  RL to optimize energy and performance, e.g., \cite{deep-learn-twc19}; or on variants of the seminal CGP-UCB algorithm \cite{ayala2021bayesian}. The main limitation of these works is the need to know contextual information (channels, user demands, etc.), and the lack of optimality guarantees, mainly in non-stationary conditions. Our approach is instead tailored to handle the inherent performance and cost volatility of O-RAN systems without access to context and provides optimality guarantees against competitive oracles.

\revminor{Finally, a key difference between our work and the above RRM literature lies in our emphasis on non-RT RAN policies. These policies serve as operational thresholds for the real-time vBS (i.e., RRM) schedulers and are facilitated by the O-RAN architecture, which has provisions for such tiered control loops \cite{oran-spec-arch, oran-spec-scenarios}. This approach enables centralized management of multiple BSs without disrupting their RRM functionality. 
Recent works \cite{tsampazi_oran, coloran} use RL for selecting the slicing and scheduling policies in O-RAN (i.e., RRM schedulers, see Sec. \ref{sec:system_model}). Nevertheless, our policy thresholds operate on a higher timescale (i.e., Non-RT) and are fed to these vBS schedulers, which make RRM decisions subject to our provided thresholds; also, they learn in an online (not offline) manner and adapt to environment changes, even if these changes happen drastically.}

A recent stream of works has followed this path to design such non-RT operation thresholds. Namely, \cite{ayala2021bayesian} uses CGP-UCB to identify thresholds for the maximum allowed transmission power and MCS to reduce the energy consumption of base stations; \cite{edgebol21} follows a similar approach but focuses on different performance KPIs; and \cite{vrain_conf} decides maximum MCS and duty cycles through deep learning. Unlike these works, our solution is the first to provide optimality guarantees for non-stationary environments and without requiring access to context, while through the proposed meta-learner we can combine and benefit from other algorithms (e.g., \cite{ayala2021bayesian}) when they perform well.
}

\section{Background \& System Model}\label{sec:system_model}

\subsection{O-RAN Background}

\revmajor{\revminor{Our model follows the O-RAN architecture \cite{oran-spec-arch, oran-spec-scenarios, o-ran-andres}, which has provisions for resource management and decision-making at different time scales: at seconds/minutes level (i.e., in Non-RT RIC through rApps\footnote{\revminor{The O-RAN general terms rApp and xApp are used to describe applications in the RIC at Non-RT (for rApp) or near-RT (for xApp) time scale, which can implement various RAN control algorithms or other pertinent services; see \cite{o-ran-andres}. }}), and the millisecond level (i.e., in Near-RT RIC through xApps). The proposed algorithms can be implemented as rApps at the Non-RT RIC, aiming to learn energy-efficient threshold radio {policies} \cite{oran-spec-scenarios}. These policies are essentially \emph{threshold rules} regarding the maximum MCS, PRB, and transmission power that each vBS, in real-time, is allowed to use. Specifically, these rules are communicated to the vBSs under the RIC, so as to guide their RRM schedulers which allocate the radio resources in real-time accordingly, see Fig. \ref{fig:architecture_leftleft}. This approach is in line with a recent stream of papers \cite{edgebol21, ayala2021bayesian, o-ran-andres, vrain_conf} proposing threshold policies and exploits the multi-tier (multi-timescale) architecture of O-RAN to offer centralized control of multiple vBS, without intervening to their (often proprietary/hardcoded) real-time schedulers.}

We consider here the typical vBS comprising a Base Band Unit (BBU) hosted by an off-the-shelf platform attached to a Radio Unit (RU).\footnote{The BBU corresponds to a Long-Term Evolution (LTE) eNodeB (eNB) for a 4G network and to a New Radio (NR) gNodeB (gNB) for a 5G network. For the latter, gNB is disaggregated into the Distributed Unit (DU), and the Centralized Unit (CU).} This tiered control approach can be seen in Policy Flow, Fig. \ref{fig:architecture_left} and \ref{fig:architecture_right} top. At each round, with typical duration $\!\sim\! 1$ second, the \emph{Policy Decider} (i.e., algorithm) devises the threshold policy which is communicated (via the A1 interface) to the Near-Real-Time (Near-RT) RIC, where an xApp (termed \emph{Policy Enforcer}) forwards it to the different vBSs\footnote{E2 nodes refer to O-RAN nodes O-CU, O-DU, O-RU, and O-eNB, which denote the CU, DU, RU, and eNB, respectively.}. This makes a two timescale system where the policy is devised at each round (seconds) and the vBSs schedulers update their typical RRM decisions every slot (mseconds), based on these rules.} 

\revminor{O-RAN's flexibility enables the usage of O1 to receive/forward the policy directly from/to the real-time scheduler \cite{coloran}. Nevertheless, our decision to involve xApps through the Near-RT RIC stems from providing a more general framework, where, e.g., another xApp could take the thresholds we provide, save them to a database, and perform additional actions to ensure that those thresholds are respected or make any other inference. Our modular architecture is designed to be adaptable and general enough to accommodate this, and is in accordance with recent works \cite{ayala2021bayesian, o-ran-andres}.}

\revmajor{The Policy Flow changes when including a meta-learner as another rApp (Fig. \ref{fig:architecture_right} bottom), whose goal is to discern the best Policy Decider among the employed ones. This is achieved by selecting at each round one of the available Policy Deciders, which, in turn, chooses the threshold policy. At the end of each round, the Near-RT RIC's Data Monitor computes a \emph{reward} by aggregating the performance and cost measurements (for all slots) received via the E2 and feeds them to the selected Policy Decider via the O1 interface ({Reward Flow} in Fig. \ref{fig:architecture_left}). The terms Policy Decider, Policy Enforcer, and Data Monitor are introduced in this work to clarify the role of each rApp/xApp, as these last terms are generic.}

\subsection{vBS Policies}

We optimize the system operation over a time horizon of $t=1,\ldots, T$ rounds. For the DL, we define the set of the maximum allowed \emph{vBS transmission powers}, $\mathcal{P}^{\text{d}} \!=\! \big\{p_{i}^{\text{d}}, \,\forall i \! \in \!\{1, \ldots, P^\text{d}\} \big\}$, the set of highest eligible \emph{MCS}, $\mathcal{M}^{\text{d}} \!= \!\big\{m_{i}^{\text{d}}, \,\forall i \! \in \! \{1, \ldots, M^\text{d}\} \big\}$, and the set of maximum \emph{PRB ratio}, $\mathcal{B}^{\text{d}} \!=\! \big\{b_{i}^{\text{d}}, \,\forall i \!\in\! \{1, \ldots, B^\text{d}\} \big\}$, where $P^\text{d}$, $M^\text{d}$, and $B^\text{d}$ denote the number of possible transmission power, MCS, and PRB ratio levels in DL, respectively.\footnote{The MCS values are predetermined, and similarly, one can quantize the power and PRB ratio values; see, e.g. \cite{gomez2016srslte}.} The PRB ratio corresponds to the portion of the available PRBs the channel supplies, e.g., $b_{i}^{\text{d}} \!=\! 0.2$ leads to utilization of \SI{20}{\percent} ($10$ out of $50$ PRBs). The DL policy  for round $t$ is denoted with $ x_t^{\text{d}} \!\in\! \mathcal{P}^{\text{d}} \!\times\! \mathcal{M}^{\text{d}} \!\times\! \mathcal{B}^{\text{d}}$. Similarly, for the UL we introduce the sets $\mathcal{M}^{\text{u}} \!=\! \big\{m_{i}^{\text{u}}, \,\forall i \!\in\! \{1, \ldots, M^\text{u}\} \big\}$ and $\mathcal{B}^{\text{u}} \!=\! \big\{b_{i}^{\text{u}}, \,\forall i \!\in\! \{1, \ldots, B^\text{u}\} \big\}$, where $M^\text{u}$, $B^\text{u}$ are the available MCS and PRB ratio levels in UL\footnote{A maximum allowed UE transmission power is not defined since the users’ transmission power has less impact on the vBS power than the MCS and PRBs in the UL. However, it can be included in $x_t^\text{u}$ if deemed relevant for another application.} and denote with $ x_t^{\text{u}} \!\in\! \mathcal{M}^{\text{u}}$ the UL policy. Putting these together, the $t$-round threshold policy is:
\begin{align*}
    x_t = (x_t^{\text{d}}, x_t^{\text{u}}) \in \X, \,\, \text{where} \,\,  \mathcal{X} = \mathcal{P}^{\text{d}}\! \times\! \mathcal{M}^{\text{d}}\! \times\! \mathcal{B}^{\text{d}}\! \times\! \mathcal{M}^{\text{u}}\! \times\! \mathcal{B}^{\text{u}}.
\end{align*}

\subsection{Rewards \& Costs}

The first goal of the learner is to maximize the \emph{effective} DL and UL throughputs, which depend on the aggregate transmitted data and the backlog in each direction. In line with prior works (see \cite{ayala2021bayesian} and references therein), we use the \emph{utility} function:
\begin{equation} \label{eq:utility}
	U_t(x_t)=
	\log \left(1\!+\!\frac{R_t^{\text{d}}(x_t^{\text{d}})}{d_t^{\text{d}}}\right)\!+\log \left(1\!+\!\frac{R_t^{\text{u}}( x_t^{\text{u}})}{d_t^{\text{u}}}\right),
\end{equation}
where $d_t^{\text{d}}, d_t^{\text{u}} \!>\! 0$, with $U_t(x_t) \!=\! 0$ otherwise.
$R_t^{\text{d}}(\cdot)$ and $R_t^{\text{u}}(\cdot)$ denote the DL and UL \emph{transmitted data} during round $t$; and $d_t^\text{d}$ and $d_t^\text{u}$ are the respective backlogs, i.e., the \textcolor{black}{traffic} \emph{demands} during $t$. \revmajor{The logarithmic transformation balances the system utility across each stream (i.e., DL and UL), but we note that other mappings (e.g., linear) might be used to capture the specifics of different applications. \revminor{We have divided the transmitted data by the actual traffic demands in the respective stream (UL or DL), since the reward should naturally be defined w.r.t. the needs of the system.} Similarly, one can readily extend the utility function to capture various QoS metrics, e.g., by measuring only the throughput above a certain threshold.}
\revminor{We refrain from making assumptions about how $x_t^{\text{u}}, x_t^{\text{d}}$ affect the transmitted data, $R_t^{\text{d}}, R_t^{\text{u}}$; similarly, the traffic demands, $d_t^{\text{d}}, d_t^{\text{u}}$, are also considered unknown and can vary arbitrarily.\footnote{\revminor{Kindly refer to Sec. \ref{sec:evaluation} for details on their calculation during the evaluation of the proposed algorithms.}} In this black-box approach, each threshold policy $x_t$ (i.e., \textit{bandit arm}) yields a reward, which we calculate a posteriori, and corresponds to the reward of the respective bandit arm. The goal of our algorithms is to learn progressively which bandit arm leads to the highest possible reward.}

The second goal of the learner is to minimize the vBS energy costs. To that end, we introduce the time-varying \emph{power cost} function $P_t(x_t)$, which depends on {policy} $x_t$ in a possibly unknown fashion. Our decision to refrain from making assumptions about this function is rooted in the complexities involved in characterizing the power consumption and costs of such virtualized base stations \cite{jose-icc21}. \revmajor{Furthermore, this black-box approach allows us to capture a range of factors that might affect the consumed energy (e.g., retransmissions due to interference or time-varying electricity prices).}

Putting these together, the \emph{learner}'s criterion is the \emph{reward} function $ \tilde f_t\!: \mathbb \X \rightarrow \mathbb R$ defined as:
\begin{equation}\label{eq:reward}
\tilde f_t(x_t)={U_t(x_t)}-\delta {P_t(x_t)},
\end{equation}

\noindent where parameter $\delta\!>\!0$ is set by the network operator to tune the relative priority of utility and energy costs. Parameter $\delta$ serves as a metric transformation, enabling a meaningful scalarization of $U_t$ and $P_t$. Furthermore, we introduce, for technical reasons, the \emph{scaled} reward function $f_t:\X \rightarrow [0,1]$, since our learning algorithms (see Sec. \ref{sec:configuration_learning_for_adversarial_environments} and \ref{sec:universal_configuration_learning_through_an_experts_model}) operate on that interval. An easy-to-implement mapping that ensures this normalization is:

\begin{equation}
f_t(x_t)=\big( \tilde f_t(x_t)- \tilde f_{min} \big)/ \big(\tilde f_{max}-\tilde f_{min}\big).  \label{eq:mapping}
\end{equation}

Parameters $\tilde f_{min}$ and $\tilde f_{max}$ can be determined based on $\delta$, the min/max value of power cost function, the min/max vBS transmission power, PRB ratio, MCS and \textcolor{black}{traffic} \textcolor{black}{demands}.

\subsection{Environment}

We refer to {the ``external''} information, i.e., $\{c_t^{\text{d}},\,c_t^{\text{u}},\,d_t^{\text{d}},\,d_t^{\text{u}}\}_{t=1}^T$ as \emph{environment}, and it is responsible for shaping the function $f_t$. It is crucial to note that both reward components, $U_t$ and $P_t$, vary with time $t$, an effect that is attributed to several factors. First, the \textcolor{black}{traffic} demands, i.e., $d_t^{\text{d}}$ in DL and $d_t^{\text{u}}$ in UL, change, sometimes drastically, in every round $t$, e.g., in small-cell networks where user churn is high, which affects $U_t$, see \eqref{eq:utility}. The demands also impact the choice of MCS and PRB, leading to different processing times and, thus, different power costs. Second, the \textcolor{black}{channel qualities (i.e., CQIs)} in DL and UL, denoted as $c_t^{\text{d}}$ and $c_t^{\text{u}}$, respectively, might vary (in slow, fast, or mixed timescales), and this affects the transmitted data $R_t^{\text{d}}$ and $R_t^{\text{u}}$ (hence, $U_t$ changes even for fixed $x_t$) and the energy cost $P_t$ (low CQI induces more BBU processing \cite{jose-icc21}).\footnote{The operation cost of the vBS hosting platform is subject to variations in external computing loads (e.g. when co-hosting other services or other vBS/DUs), changes in the monetary cost (or availability) of the energy price, and so on.} 

Importantly, we consider the environment to be \emph{unknown} at the beginning of each scheduling round $t$. \revmajor{It is often challenging to predict the {traffic demands}, energy availability, {channel qualities}, wireless interference and other performance-related impairments that each vBS might encounter over the time window of several seconds that these threshold policies will be enforced. This, in turn, means that when we decide $x_t$ in each round, we do not have access to $f_t$; and this is in notable contrast to the typical real-time radio management solutions that require accurate context information. Our model is hence oblivious to this information and this renders our solution applicable to a range of practical scenarios, such as those involving highly volatile  environments and small cells where demands are non-stationary \cite{paschos-comm-mag}.}

\section{Policy Learning for Adversarial Environments}
 \label{sec:configuration_learning_for_adversarial_environments}

\subsection{Objectives \& Approach} 

\textcolor{black}{The goal of our rApp (see Policy Decider, Fig. \ref{fig:architecture})} is to find a sequence of \textcolor{black}{policies} $\{x_t\}_{t=1}^T$ that induce rewards approaching the cumulative reward of the single best \textcolor{black}{policy} (\emph{benchmark}). Formally, we employ the metric of \emph{static expected regret}:
\begin{equation}
\R_T= \max_{x\in \X	} \left\{ \sum_{t=1}^T f_t(x)\right\} - \bE \left[ \sum_{t=1}^T f_t(x_t)	\right], \label{eq:regret}
\end{equation}
\noindent where the first term is the aggregate performance of the benchmark (ideal) {policy} that can be selected only with knowledge of all future reward functions until $T$; and the second term measures the aggregate performance of the algorithm. The expectation in the second term is induced by any possible randomization \textcolor{black}{in $\{f_t\}_{t=1}^T$ and} in the selection of $\{x_t\}_{t=1}^T$ by the learner. Eventually, our objective is to devise a rule that decides the \textcolor{black}{policies} in such a way that the average regret, for any possible realization of {rewards} $\{f_t\}_{t=1}^T$, diminishes asymptotically to zero, i.e., $\lim_{T\rightarrow \infty}\R_T/T=0$. Importantly, we wish to ensure this condition: \emph{(i)} without knowing $f_t$ when deciding $x_t$, and \emph{(ii)} by observing only $f_t(x_t)$ when applying $x_t$, and not the complete function \textcolor{black}{$f_t(x), \forall x \in \X$,  as only \emph{one} policy $x_t$ in each round $t$ can be deployed to the vBS.}

The proposed scheme, named Bandit Scheduling for vBS (\texttt{BSvBS}), builds upon the \emph{Exp3} algorithm \cite{exp3_auer}, and its underlying idea is to learn the correct probability distribution $y_t^{\text{B}}$ \textcolor{black}{(B refers to \texttt{BSvBS})} from which we can sample $x_t$ for each round $t$:
\[
x_t \sim \mathbb{P}(x_t=x') = y_t^\text{B}(x'), \forall x' \in \mathcal{X}.
\] 
The distributions $\{y_t^\text{B}\}_{t=1}^{T}$ belong to the probability simplex:
\[
 \Y^\text{B}=\left\{ y^\text{B}\in [0,1]^{|\mathcal{X}|} \,\, \bigg \vert \,\,\sum_{x\in \mathcal{X}} y^\text{B}(x)=1	\right\}, \notag
\]
and are calculated in each round using the following explore / exploit rule:
\begin{equation}
    y_t^\text{B}(x) = \frac{\gamma}{|\mathcal{X}|} + (1-\gamma)\frac{w_t^\text{B}(x)}{\sum_{x' \in \mathcal{X}}{w_t^\text{B}(x')}},\,\,\, \forall \, x\in \mathcal{X}. \label{eq:update-distribution}
\end{equation}
\revmajor{This formula includes three components: (i) the exploration part, $ {1}/{|\mathcal{X}|}$ which selects a policy randomly, (ii) the exploitation part, ${w_t^\text{B}(x)}/{\sum_{x' \in \mathcal{X}}{w_t^\text{B}(x')}}$, which chooses a threshold policy based on its performance up until $t-1$, where the weight $w_t^\text{B}(x)$ tracks the reward of each policy $x \in \mathcal{X}$, and (iii) parameter $\gamma \in (0,1]$, which prioritizes the former (explore) or the latter part (exploit). }

For the latter, we employ the weight vector $w_t\!=\!\big(w_t(x)\!:\! x\!=\!1,\ldots, |\mathcal{X}|\big)$ that tracks the success of each tested policy, which is updated at the end of each round using:
\begin{equation}
    w_{t+1}^\text{B}(x)=w_t^\text{B}(x)\exp \left(\frac{\gamma \Phi_t^\text{B}(x)}{|\mathcal{X}|} \right),\,\,\,\forall\,x\in\mathcal{X}, \label{eq:weights}
\end{equation}
\revmajor{which assigns a probability exponentially proportional to the cumulative reward $\Phi_t^\text{B}(x)$, that accounts for the selection of each policy, namely:} 
\begin{equation}
\Phi_t^\text{B}(x)=
\begin{cases}
	f_t(x_{t}) / y_t^\text{B}(x_t), & \text{if} \,\, x = x_t,\\
	0, & \text{otherwise.}
\end{cases}
\label{eq:weighted-feedback}
\end{equation}
\revmajor{By dividing each observed reward, $f_t(x_{t})$ with the selection probability of the threshold-policy, $y_t^\text{B}(x_t)$, we ensure the conditional expectation of $\Phi_t^\text{B}(x)$ is the actual reward $f_t(x), \forall x \in \mathcal{X}$, meaning that $\Phi_t^\text{B}$ is an unbiased function estimator of the rewards \cite{cesa-bianchi_lugosi_2006}. Intuitively, this compensates the reward of thresholds that are unlikely to be chosen.} The steps of the learning scheme are summarized in Algorithm \ref{BSvBS}, which takes as input $\gamma$ and devises the ideal selection probability for each \textcolor{black}{policy} based on its expected reward.

\subsection{Optimality Guarantees}

The performance of Algorithm \ref{BSvBS} is \textcolor{black}{characterized} in the following lemma, which holds for any possible sequence of functions $\{f_t\}_{t=1}^T$:

\begin{lemma}\label{lemma_BSvBS}
Let $T > 0$ be a fixed time horizon. Set input parameter $\gamma = \min\left\{1, \sqrt{{|\mathcal{X}| \ln{|\mathcal{X}|}}/{\big((e-1)T\big)}}\right\}$. Then, running Algorithm 1 ensures that the expected regret is:

\begin{equation}
	\R_T \leq 2\sqrt{(e - 1)}\sqrt{T|\mathcal{X}|\ln{|\mathcal{X}|}} \label{eq:exp3_bound}
\end{equation}
\end{lemma}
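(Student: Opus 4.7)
The plan is to adapt the classical potential-function analysis of Exp3 from Auer et al.\ to our notation, since Algorithm 1 reduces to Exp3 applied to the finite action set $\mathcal{X}$ with loss-to-reward scaling in $[0,1]$. The proof is essentially an exponential-weights argument: I track the evolution of the total weight $W_t^\text{B} = \sum_{x\in\mathcal{X}} w_t^\text{B}(x)$ over the rounds, sandwich $\ln(W_{T+1}^\text{B}/W_1^\text{B})$ between a lower bound that depends on the best fixed policy and an upper bound that depends on the algorithm's cumulative reward, and then take expectations to exploit the unbiasedness of $\Phi_t^\text{B}$.

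First I would initialize $w_1^\text{B}(x)=1$ for all $x\in\mathcal{X}$, so $W_1^\text{B}=|\mathcal{X}|$. For the lower bound, I pick the best fixed policy $x^\star \in \arg\max_{x\in\mathcal{X}} \sum_{t=1}^T f_t(x)$ and use $\ln(W_{T+1}^\text{B}/W_1^\text{B}) \geq \ln(w_{T+1}^\text{B}(x^\star)) - \ln|\mathcal{X}| = (\gamma/|\mathcal{X}|)\sum_{t=1}^T \Phi_t^\text{B}(x^\star) - \ln|\mathcal{X}|$, by iterating \eqref{eq:weights}. For the upper bound, I write $W_{t+1}^\text{B}/W_t^\text{B} = \sum_{x} (w_t^\text{B}(x)/W_t^\text{B}) \exp(\gamma \Phi_t^\text{B}(x)/|\mathcal{X}|)$, relate $w_t^\text{B}(x)/W_t^\text{B}$ to $y_t^\text{B}(x)$ via \eqref{eq:update-distribution}, and apply the standard inequality $e^{z} \leq 1 + z + (e-2)z^{2}$ valid for $z \leq 1$ (which holds because $\gamma \Phi_t^\text{B}(x)/|\mathcal{X}| \leq 1$ given the scaling $f_t\in[0,1]$ and the definition of $\Phi_t^\text{B}$). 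Summing $\ln(1+u)\leq u$ over $t$ produces an upper bound of the form $(\gamma/(|\mathcal{X}|(1-\gamma)))\sum_{t}\sum_{x} y_t^\text{B}(x)\Phi_t^\text{B}(x) + ((e-2)\gamma^2/(|\mathcal{X}|^2(1-\gamma)))\sum_{t}\sum_{x} y_t^\text{B}(x)\Phi_t^\text{B}(x)^2$.

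Next I take expectations. By construction of $\Phi_t^\text{B}$ in \eqref{eq:weighted-feedback}, $\mathbb{E}[\Phi_t^\text{B}(x)\mid y_t^\text{B}] = f_t(x)$ and $y_t^\text{B}(x)\Phi_t^\text{B}(x)^2 = f_t(x_t)\Phi_t^\text{B}(x)$ when $x=x_t$ and $0$ otherwise, so $\sum_x y_t^\text{B}(x)\Phi_t^\text{B}(x)^2 \leq \sum_x \Phi_t^\text{B}(x)$ in expectation, which telescopes conveniently. Combining the two bounds on $\ln(W_{T+1}^\text{B}/W_1^\text{B})$ and rearranging yields, after routine manipulation, the canonical inequality $\max_x \sum_t f_t(x) - \mathbb{E}[\sum_t f_t(x_t)] \leq (e-1)\gamma T + (|\mathcal{X}|\ln|\mathcal{X}|)/\gamma$. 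Finally, I substitute the prescribed $\gamma = \min\{1,\sqrt{|\mathcal{X}|\ln|\mathcal{X}|/((e-1)T)}\}$, which balances the two terms, to obtain $\R_T \leq 2\sqrt{(e-1)T|\mathcal{X}|\ln|\mathcal{X}|}$.

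The main obstacle I expect is bookkeeping rather than conceptual: carefully controlling the $1-\gamma$ factor that appears when converting between $w_t^\text{B}/W_t^\text{B}$ and $y_t^\text{B}$, and cleanly showing that the second-order term involving $\sum_t\sum_x y_t^\text{B}(x)\Phi_t^\text{B}(x)^2$ contributes only an $O(T)$ term after taking expectations. A secondary subtlety is verifying that the prescribed $\gamma$ is admissible (i.e.\ the range condition $\gamma\Phi_t^\text{B}(x)/|\mathcal{X}|\leq 1$ holds), which requires using the truncation $\gamma\leq 1$ together with $f_t\in[0,1]$ and $y_t^\text{B}(x)\geq \gamma/|\mathcal{X}|$, so that $\Phi_t^\text{B}(x)\leq |\mathcal{X}|/\gamma$, making the exponent at most $1$ and legitimizing the inequality $e^z \leq 1+z+(e-2)z^2$.
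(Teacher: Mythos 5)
Your proposal is correct and follows essentially the same route as the paper: the paper simply invokes the main Exp3 result of Auer et al., which gives exactly the intermediate inequality $\R_T \leq (e-1)\gamma T + |\mathcal{X}|\ln|\mathcal{X}|/\gamma$ you derive, and then tunes $\gamma$ as prescribed. Your write-up just supplies the standard potential-function derivation of that cited bound in full detail (including the correct verification that $\gamma\Phi_t^\text{B}(x)/|\mathcal{X}|\leq 1$), so it is a self-contained version of the same argument.
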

\begin{proof}
\textcolor{black}{The proof follows by tailoring the main result of \cite{exp3_auer}, which provides an upper bound to \eqref{eq:regret}, namely:
\begin{align}
    \R_T \leq (e-1) \, \gamma  \,\max_{x\in \X	} \left\{ \sum_{t=1}^T f_t(x)\right\} + \frac{|\mathcal{X}|\ln{|\mathcal{X}|}}{\gamma}.\label{eq:lemma1_exp3}
\end{align}
The number of \emph{bandit arms} in our case corresponds to the eligible \textcolor{black}{policies}; hence it is equal to $|\X|$. Given that: \emph{(i)} the horizon $T$ can be known in advance, and \emph{(ii)} the rewards $f_t(x_t)$ for each chosen \textcolor{black}{policy} $x_t$ at round $t$ cannot be greater than 1 (due to the normalization described in Sec. \ref{sec:system_model}), we determine an upper bound $g$ of $\max_{x\in \X} \left\{ \sum_{t=1}^T f_t(x)\right\}$ equal to $T$, i.e., $g \!=\! T$. By choosing the suggested $\gamma$, \eqref{eq:lemma1_exp3} leads to \eqref{eq:exp3_bound}}.
\end{proof}

\setlength{\textfloatsep}{0pt}
\begin{algorithm}[t]\label{BSvBS} 
{\small{
 \nl \textbf{Parameters}: $\gamma = (0, 1]$\\ 
    \nl \textbf{Initialize}: at $t=1$, $w_{1}^\text{B}(x)\leftarrow 1, \,\,\forall x\in \mathcal X$\\
    \nl \For{$t=1,2,\ldots, T$}{
            \nl Define the probability $y_t^\text{B}(x), \,\,\forall x\in \mathcal X$ using \eqref{eq:update-distribution}.\\
            \nl Sample next \textcolor{black}{policy}: $x_t \sim y_t^\text{B}$.\\
            \nl Receive \& scale reward $f_t(x_t)$ using \eqref{eq:reward} and \eqref{eq:mapping}.\\
            \nl Calculate weighted feedback $\Phi_t^\text{B}(x), \,\,\forall x\in \mathcal X$ using \eqref{eq:weighted-feedback}.\\
            \nl Update $w_{t}^\text{B}(x), \,\,\forall x\in \mathcal X$ using \eqref{eq:weights}.
        }	
        \caption{{Bandit Scheduling for vBS (\texttt{BSvBS})}}
}}      
\end{algorithm} 

\subsection{Discussion for \texttt{BSvBS}}

Algorithm \ref{BSvBS}, which operates with bandit feedback, is guaranteed to achieve the same performance as the (unknown) single best \textcolor{black}{policy} without imposing any conditions on the system operation, \textcolor{black}{channel qualities}, or \textcolor{black}{traffic} demands; see Lemma \ref{lemma_BSvBS}. 

\revmajor{Regarding the overheads of this algorithm, \texttt{BSvBS} depends on the number of policies $|\mathcal{\mathcal{X}}|$ and the number of rounds $T$. Each round of the algorithm involves updating the probability distribution over the policies, see equation \eqref{eq:update-distribution}, which requires $\mathcal{O}(|\mathcal{X}|)$ time. Additionally, the algorithm updates the weights for each eligible threshold policy based on the reward, which again takes $\mathcal{O}(|\mathcal{X}|)$ time, see equations \eqref{eq:weights} and \eqref{eq:weighted-feedback}. Thus, for $T$ rounds, the time complexity is generally $\mathcal{O}(T|\mathcal{X}|)$. Also, its space complexity is $\mathcal{O}(|\mathcal{X}|)$, as it needs to store only the weights and the probabilities for each policy. In other words, the algorithm is both robust and lightweight in terms of implementation, especially compared to its main competitor, \texttt{BP-vRAN} \cite{ayala2021bayesian}, which has $\mathcal{O}(T^3)$ time complexity and $\mathcal{O}(T^2)$ space complexity.} Nevertheless, the robustness of \texttt{BSvBS} is achieved via a conservative approach that prevents the system from performing better when the conditions allow it. We tackle this issue in the following section.

\section{Universal \textcolor{black}{Policy} Learning through {a Meta-Learner}} \label{sec:universal_configuration_learning_through_an_experts_model}

\subsection{Modeling \& Challenges}

The analysis in Sec. \ref{sec:configuration_learning_for_adversarial_environments} demonstrates the effectiveness of the proposed adversarial scheme \textcolor{black}{in \emph{all} environments, whether challenging or easy}. However, in the latter case, alternative schemes that leverage the knowledge of the environment can achieve faster learning convergence \cite{ayala2021bayesian}. \textcolor{black}{\emph{Our goal here is to devise a meta-learning scheme that leverages multiple algorithms, each tailored to a specific environment, and chooses dynamically the optimal one.} This idea is leveraged in online learning \cite{NIPS2014_01f78be6}; however, to the best of the author's knowledge, it is hitherto unexplored for resource allocation in RAN.}

\setlength{\textfloatsep}{0pt}
\begin{algorithm}[t]\label{MetBS} 
	{\small{	
			\nl \textbf{Parameters}: $\eta = (0, 1]$\\ 
			\nl \textbf{Initialize}: at $t=1$, $w_{1}^\text{M}(j)\leftarrow 1$ and $h_0^{j, \text{S}} \leftarrow \emptyset, \,\,\forall j\in \mathcal{A}$\\
			\nl \For{$t=1,2,\ldots, T$}{
				\nl Define the probability $y_t^\text{M}(j), \,\,\forall j\in \mathcal{A}$ using \eqref{eq:update-distribution_MetBS}.\\
				\nl Sample algorithm $a^{i_t}$ according to: $a^{i_t} \sim y_t^\text{M}$.\\
				\nl Algorithm $a^{i_t}$ recommends \textcolor{black}{policy} $x_t^{i_t}$ based on $h_t^{i_t, \text{S}}$. \\ 
				\nl Receive \& scale reward $f_t(x_t^{i_t})$ using \eqref{eq:reward} and \eqref{eq:mapping}.\\
				\nl Calculate weighted feedback $\Phi^\text{M}(j), \,\,\forall j \in \mathcal{A}$ using \eqref{eq:weighted-feedback_MetBS}.\\
				\nl Update $w_{t}^\text{M}(j), \,\,\forall j \in \mathcal{A}$ using \eqref{eq:weights_MetBS}. \\
				\nl Sample $\xi_t$ using \eqref{eq:xi}. \\
				\nl \eIf{$\xi_t = 0$}{block feedback of algorithm $a^{i_t}$, i.e., $h_t^{i_t, \text{S}} \leftarrow h_{t-1}^{i_t, \text{S}}$.}{allow feedback of algorithm $a^{i_t}$, i.e., $h_t^{i_t, \text{S}} \leftarrow h_{t-1}^{i_t, \text{S}} \cup \big(x_t^{i_t}, f_t(x_t^{i_t})\big)$.} 
			}
			\caption{{Meta-learning for vBS (\texttt{MetBS})}}
	}}    
\end{algorithm} 

\revmajor{In practical terms, the implementation of such a meta-learning algorithm (Algorithm \ref{MetBS}) can be realized in the non-RT RIC, i.e., co-located with the Policy Deciders.} Namely,  we deploy $A$ {rApps, i.e.,} \emph{algorithms} $a^j, j \!\in\! \mathcal{A} \!=\! \{1, \ldots, A \} $, each associated with a set of \textcolor{black}{policies} $\mathcal{X}^j$; and another rApp for the \emph{meta-learner} that observes their performances over a time horizon of $t \!=\! 1,\ldots, T$ rounds \textcolor{black}{via the R1 interface (see Fig. \ref{fig:architecture})}. At a time $t$, an algorithm $a^j, j \!\in\! \mathcal{A}$ takes as input the \emph{full} history $h_{t}^{j} = \big\{\big(x_\tau^{j}, f_\tau(x_\tau^{j})\big) \big\}_{\tau=1}^{t-1}$ of its previously proposed {policies} and their respective rewards, and proposes a {policy} $x_t^j = a^j(h_t^j)$. The objective of the meta-learner is to find the best performing algorithm $a^{i^*}, $ $i^* \!\in\! \mathcal{A}$. \revmajor{The challenge lies in the fact that the algorithms are learning entities that update their proposed threshold policies based on bandit feedback, which in turn depends on whether they are selected by the meta-learner.} In other words, at round $t$, the meta-learner chooses one algorithm ${i_t} \in \mathcal{A}$, denoted as $a^{i_t}$, which, in turn, proposes one \textcolor{black}{policy} $x_t^{i_t} \in \mathcal{X}^{i_t}$ that is deployed in the vBS; and thus, reward $f_t(x_t^{i_t})$ is returned,\footnote{This is a natural approach for our problem setting, as each algorithm proposes possibly different \textcolor{black}{policies} at each round, but only the \textcolor{black}{policy} of one algorithm can be deployed to the vBS and return a reward.} cf. \eqref{eq:reward}. Lastly, $a^{i_t}$ updates its learning state by updating its history $h_t^{i_t} \leftarrow h_{t-1}^{i_t} \cup  \big(x_t^{i_t}, f_t(x_t^{i_t})\big)$. All other algorithms, i.e., $\forall j \!\in\! \mathcal{A}\!: j \neq i_t$, observe no feedback and do not update their learning state at time $t$. 

This downward spiral creates a challenging situation where the partial feedback reduces the learning capability of the meta-learner, which is further compounded by the limited chances of obtaining feedback for each \textcolor{black}{policy}. Without coordination between the meta-learner and the algorithms in the bandit setting, it is proven that the meta-learner will achieve linear regret, even if each of the algorithms obtains sub-linear regret if it were run on its own (and thus obtain feedback in every round) \cite{corralling, singla18}. To surmount this challenge, effective coordination between the algorithms and the meta-learner becomes essential. The approach we employ, inspired by the ideas presented in \cite{singla18}, aims to minimize the interaction required between the algorithms and the meta-learner. Other existing meta-algorithms such as \cite{corralling} and \cite{maillard-munos} require feeding unbiased estimates of rewards to the algorithms, meaning that the meta-learner has access to the rewards of the algorithms and can modify them; an assumption that we want to drop in our setting.

In our case, the meta-learner can allow or block the chosen algorithm $a^{i_t}$ from learning at round $t$ by sending a corresponding bit ($0$ or $1$). This means that each algorithm $a^j, j \in \mathcal{A}$ has access to \emph{sparse} history $h_{t}^{j, \text{S}} = \big\{\big(x_\tau^{j}, f_\tau(x_\tau^{j})\big) \,|\, \xi_\tau = 1\big\}_{\tau=1}^{t-1}$, where $\xi_\tau$ is a Bernoulli random variable, i.e., $\xi_\tau \sim \mathcal{B}(\rho_\tau)$, defined by the meta-learner. More precisely, with probability $\rho_t \in(0,1]$ at each round 
$t$, the meta-learner sends bit $1$, allowing the chosen algorithm $a^{i_t}$ to learn, i.e., update its history $h_t^{i_t, \text{S}} \leftarrow h_{t-1}^{i_t, \text{S}} \cup  \big(x_t^{i_t}, f_t(x_t^{i_t})\big)$; otherwise, $h_t^{i_t, \text{S}} \leftarrow h_{t-1}^{i_t, \text{S}}$. Obviously, it is true that if $\rho_t = 1,$ for $t=1,\ldots, T$, then $h_{t}^{j, \text{S}} \equiv h_{t}^{j}$. Intuitively, this prevents a situation where algorithms that initially find a good \textcolor{black}{policy}, but later experience a decline in performance, are continuously selected by the meta-learner over algorithms that explore more extensively in the early stages but achieve superior performance later. By choosing $\rho_t$ accordingly in every round $t$ (see the following analysis), all algorithms could observe feedback in an equal number of rounds (although the best-performing algorithms will be chosen more often) and thus have equal learning steps to improve their performance.

\subsection{Objectives \& Approach}
 
Following this rationale, the second proposed scheme, named \emph{Meta-Learning for vBS} (\texttt{MetBS}), builds upon \cite{singla18}. Due to \textcolor{black}{its similarity with} Algorithm \ref{BSvBS}, we elaborate next only on its most crucial and distinct steps. The concept lies in learning the sequence of distributions $\{y_t^\text{M}\}_{t=1}^{T}$ \textcolor{black}{(M refers to MetBS)}, which enables the selection of an algorithm ${i_t} \in \mathcal{A}$, denoted as $a^{i_t}$ at round $t$ based on the following explore-exploit criteria with parameter $\eta \in (0,1]$: 
\begin{equation}
    y_t^\text{M}(j) = \frac{\eta}{A} + (1-\eta)\frac{w_t^\text{M}(j)}{\sum_{j' \in \mathcal{A}}{w_t^\text{M}(j')}},\,\,\, \forall \, j\in \mathcal{A}. \label{eq:update-distribution_MetBS}
\end{equation}

Based on its history $h_t^{i_t, \text{S}}$ and its internal mechanism of using it (e.g., \texttt{BSvBS} uses \eqref{eq:update-distribution}), $a^{i_t}$ outputs a \textcolor{black}{policy} $x_t^{i_t} \in \mathcal{X}^{i_t}$. The meta-learner observes only the reward $f_t(x_t^{i_t})$ that $a^{i_t}$ produced, and thus, similarly to \texttt{BSvBS}, calculates an unbiased estimator for the rewards\footnote{We recall that no assumptions are made about the sequence of rewards $\{f_t\}_{t=1}^T$, which can even be chosen from an adversary, as described analytically in Sec. \ref{sec:system_model}.} of all the algorithms (even the unchosen ones):
\begin{equation}
    \Phi_t^\text{M}(j)=
    \begin{cases}
    	f_t(x_{t}^{i_t}) / y_t^\text{M}(i_t), & \text{if} \,\, j = i_t,\\
    	0, & \text{otherwise,}
    \end{cases}, 
    \forall j \in \mathcal{A}
    \label{eq:weighted-feedback_MetBS}
\end{equation}
The weights, which determine the meta-learner's choices in each $t$, are updated according to:
\begin{equation}
    w_{t+1}^\text{M}(j)=w_t^\text{M}(j)\exp \left(\frac{\eta \Phi_t^\text{M}(j)}{A} \right),\,\,\,\forall\,j\in \mathcal{A}. \label{eq:weights_MetBS}
\end{equation}
Before \texttt{MetBS} proceeds to the next round, it has the ability to block algorithm $a^{i_t}$ from acquiring feedback (i.e., learning) at this particular round $t$. Consequently, \texttt{MetBS} uses the following Bernoulli random variable to allow or block the feedback of $a^{i_t}$:
\begin{equation}
    \xi_t \sim \mathcal{B}\left(\frac{\eta}{A \,y_t^\text{M}(j) }\right), j = i_t.\label{eq:xi}
\end{equation}
More specifically, with probability $\rho_t = \eta/(A \,y_t^\text{M}(j)), \, j=i_t$ at each round $t$, the selected algorithm $a^{i_t}$ updates its learning state, while with the remaining probability, its feedback gets blocked. The selection of this random variable ensures that the feedback of each algorithm is allowed, on average, with constant probability $\rho = \eta/A$ over the whole horizon $T$. The analytical steps of this learning scheme are shown in Algorithm \ref{MetBS}.

It is crucial to stress that the regret of the meta-learner w.r.t. the best algorithm, cf. \eqref{eq:component1_regret_MetBS}, is uninformative on its own in the bandit setting. The reason can be attributed to the indirect association between rewards at any given time $t$ and the algorithms the meta-learner previously selected. The past selections define the current learning state of the algorithms, which, in turn, impacts the rewards \cite{maillard-munos}. Therefore, the evaluation should contain a comparison to an ideal policy that consistently selects the best algorithm, which obtains feedback in every $t$ and performs well with respect to the single best \textcolor{black}{policy}. Formally, we are interested in minimizing the regret of the meta-learner w.r.t. the single best \textcolor{black}{policy}, which is equal to:
\begin{align}
\R_T^\text{M}= \underbrace{\max_{x\in \X^{i^*}	} \left\{ \sum_{t=1}^T f_t(x)\right\}}_\text{best \textcolor{black}{policy}} - \underbrace{\bE\left[ \sum_{t=1}^T f_t\big(a^{i_t}(h^{i_t, \text{S}}_t)\big)\right]}_\text{meta-learner}.\label{eq:all_regret_MetBS}
\end{align}

The aggregate reward of the best algorithm $a^{i^*}$ achieved until round $t$ is:
\begin{align}
\max_{j \in \mathcal{A}} \left\{ \sum_{t=1}^T \bE \bigg[f_t\big(a^j(h^{j, \text{S}}_t)\big)\bigg]\right\} \equiv \sum_{t=1}^T \bE \bigg[f_t\big(a^{i^*}(h^{i^*, \text{S}}_t)\big)\bigg].\label{eq:agg_reward_best_algo}
\end{align}

We add and subtract \eqref{eq:agg_reward_best_algo} from \eqref{eq:all_regret_MetBS}, and we derive:

\begin{align}
\R_T^\text{M}= \R_T^{\text{M}_1} +  \R_T^{\text{M}_2}, \label{eq:all_split_regret_MetBS}
\end{align}
where $\R_T^{\text{M}_1}$ corresponds to the regret of the meta-learner with respect to the best algorithm:
\begin{align}
    \R_T^{\text{M}_1}= \underbrace{\sum_{t=1}^T \bE \bigg[f_t\big(a^{i^*}(h^{i^*, \text{S}}_t)\big)\bigg]}_\text{best algorithm} - \underbrace{\bE\left[ \sum_{t=1}^T f_t\big(a^{i_t}(h^{i_t, \text{S}}_t)\big)\right]}_\text{meta-learner}, \label{eq:component1_regret_MetBS}
\end{align}
and $\R_T^{\text{M}_2}$ corresponds to the regret of the best algorithm w.r.t. to the best \textcolor{black}{policy}:
\begin{align}
    \R_T^{\text{M}_2}=
    \underbrace{\max_{x\in \X^{i^*}	} \left\{ \sum_{t=1}^T f_t(x)\right\}}_\text{best \textcolor{black}{policy}} - \underbrace{\sum_{t=1}^T \bE \bigg[f_t\big(a^{i^*}(h^{i^*, \text{S}}_t)\big)\bigg]}_\text{best algorithm}.
 \label{eq:component2_regret_MetBS}
\end{align}

If $a^{i^*}$ had access to its full history $h_t^{i^*}$, we denote as $\beta^{i^*} \in [0,1]$ the exponent of the upper bound of its regret, namely\footnote{\textcolor{black}{For instance, if \texttt{BSvBS} is the best algorithm $a^{i^*}$, then $\beta^{i^*} = 1/2$, see Lemma \ref{lemma_BSvBS}.}}:
\[
\max_{x\in \X^{i^*}} \left\{ \sum_{t=1}^T f_t(x)\right\} - \sum_{t=1}^T \bE \bigg[f_t\big(a^{i^*}(h^{i^*}_t)\big)\bigg] \leq \mathcal{O}(T^{\beta^{i^*}}).
\]
However, in the considered analysis, it has access to its partial history $h_t^{i^*, \text{S}}$. For proving a non-trivial upper bound on $\R_T^{\text{M}}$ in this case, the best performing algorithm $a^{i^*}$ should satisfy the following:
\begin{align}
    \max_{x\in \X^{i^*}} \left\{ \sum_{t=1}^T f_t(x)\right\} - \sum_{t=1}^T \bE \bigg[f_t\big(a^{i^*}(h^{i^*, \text{S}}_t)\big)\bigg] \leq \mathcal{O}\bigg(\frac{(\rho T)^{\beta^{i^*}}}{\rho}\bigg),\label{eq:smooth_dynamics}
\end{align}
where $\rho = \eta / A$, as defined beforehand. A rich class of online learning algorithms, including Exp3 (and thus, \texttt{BSvBS}), satisfy \eqref{eq:smooth_dynamics}, \textcolor{black}{which, in turn, quantifies the robustness of an online learning algorithm w.r.t. the sparsity of the history} \cite{singla18}.

\subsection{Optimality Guarantees}
The performance of Algorithm \ref{MetBS} is captured by the following lemma:

\begin{lemma}\label{lemma_MetBS}
Let $T > 0$ be a fixed time horizon, and assume the best algorithm, $a^{i^*}$, satisfies \eqref{eq:smooth_dynamics} with $\beta^{i^*}$. Set input parameter $\eta = \Theta\big(T^{-\frac{1-\beta}{2-\beta}} A^\frac{1-\beta}{2-\beta}(\log A)^{\frac{1}{2} \mathbbm{1}_{\{\beta = 0\}}}\big)$, where $\beta \geq \beta^{i^*}$. Then, running Algorithm \ref{MetBS} ensures that the expected regret is sub-linear:

\begin{equation}
    \R_T^\text{M} \leq  \mathcal{O}\big(T^{\frac{1}{2-\beta}} A^\frac{1}{2-\beta}(\log A)^{\frac{1}{2} \mathbbm{1}_{\{\beta = 0\}}}\big)\label{eq:MetBS_bound}
\end{equation}
\end{lemma}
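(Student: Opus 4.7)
The plan is to exploit the decomposition $\R_T^\text{M} = \R_T^{\text{M}_1} + \R_T^{\text{M}_2}$ already supplied in \eqref{eq:all_split_regret_MetBS}, bound each summand as a function of $\eta$, and then pick $\eta$ to balance them. The first component captures how well the meta-learner tracks the best algorithm under bandit feedback, while the second captures how much the best algorithm degrades under its sparsified history.

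For $\R_T^{\text{M}_1}$, I would apply the Exp3 analysis directly to the meta-learner, treating the $A$ algorithms as arms whose scaled rewards lie in $[0,1]$. The update rules \eqref{eq:update-distribution_MetBS}--\eqref{eq:weights_MetBS} mirror those of \texttt{BSvBS}, and $\Phi_t^\text{M}$ in \eqref{eq:weighted-feedback_MetBS} is conditionally unbiased, so the same potential-function argument as in Lemma~\ref{lemma_BSvBS} yields
\[
\R_T^{\text{M}_1} \leq (e-1)\,\eta\,T + \frac{A\ln A}{\eta}.
\]
The subtle point is that the per-arm reward process is endogenous: the state of each $a^j$ depends on when it has been queried. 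The Bernoulli rule \eqref{eq:xi} is designed precisely to sidestep this, since the exploration floor $y_t^\text{M}(j)\geq \eta/A$ from \eqref{eq:update-distribution_MetBS} ensures $\rho_t = \eta/(A\,y_t^\text{M}(i_t)) \in (0,1]$, and the marginal probability that any fixed algorithm receives feedback in round $t$ equals the constant $\rho=\eta/A$, independent of the meta-learner's current choice.

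For $\R_T^{\text{M}_2}$, I would invoke the smooth-dynamics assumption \eqref{eq:smooth_dynamics} on $a^{i^*}$, substituting $\rho = \eta/A$ and using $\beta^{i^*}\leq\beta$, $A\geq 1$, $\eta\in(0,1]$ to obtain
\[
\R_T^{\text{M}_2} \leq \mathcal{O}\!\Bigl(\tfrac{(\rho T)^{\beta^{i^*}}}{\rho}\Bigr) \leq \mathcal{O}\bigl(T^\beta A^{1-\beta}\eta^{\beta-1}\bigr).
\]
Summing gives $\R_T^\text{M} \leq c_1\eta T + c_2 A\ln A/\eta + c_3 T^\beta A^{1-\beta}/\eta^{1-\beta}$. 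Balancing the first and third terms yields $\eta^{2-\beta}=T^{\beta-1}A^{1-\beta}$, i.e., the prescribed $\eta = \Theta\bigl(T^{-(1-\beta)/(2-\beta)} A^{(1-\beta)/(2-\beta)}\bigr)$, plus an additional $\sqrt{\log A}$ scaling in the $\beta=0$ case that mirrors Lemma~\ref{lemma_BSvBS}. Plugging this $\eta$ back in, the first and third terms reduce to $T^{1/(2-\beta)} A^{(1-\beta)/(2-\beta)}$ and the middle term to $(\log A)\,T^{(1-\beta)/(2-\beta)} A^{1/(2-\beta)}$, all of which can be absorbed into $\mathcal{O}\bigl(T^{1/(2-\beta)} A^{1/(2-\beta)} (\log A)^{(1/2)\mathbbm{1}_{\{\beta=0\}}}\bigr)$ using $A,T\geq 1$. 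The main obstacle I anticipate is the rigorous justification of the Exp3 bound on $\R_T^{\text{M}_1}$ in this adaptive setting: one must show that the blocking rule effectively reduces the analysis to the oblivious-adversary case, so that the standard unbiased-estimator and potential-function machinery of \cite{exp3_auer} goes through verbatim despite the coupling between the meta-learner's past decisions and the algorithms' current states.
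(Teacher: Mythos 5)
Your proposal follows essentially the same route as the paper: the same decomposition $\R_T^\text{M}=\R_T^{\text{M}_1}+\R_T^{\text{M}_2}$, an Exp3-type bound $c\,\eta T + A\log A/\eta$ on the meta-learner's regret against the best algorithm, the smooth-dynamics assumption \eqref{eq:smooth_dynamics} with $\rho=\eta/A$ for the second component, and the same balancing of $\eta$ to obtain \eqref{eq:MetBS_bound}. The paper's own proof is terser (it defers the endogeneity/blocking issue entirely to \cite{singla18}), whereas you correctly flag that justifying the Exp3 bound under the adaptive, history-dependent arms is the real technical content being imported.
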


\begin{proof}
The proof follows by tailoring the main result of \cite{singla18}; we therefore provide a brief but sufficient explanation. \textcolor{black}{By applying Lemma \ref{lemma_BSvBS}}, \eqref{eq:component1_regret_MetBS} gives:

\begin{align}
    \R_T^{\text{M}_1} \leq c \, \eta  \,T + \frac{A \log A}{\eta},\label{eq:lemma2_exp3}
\end{align}

\textcolor{black}{where $c >0$ is a constant.} Adding \eqref{eq:smooth_dynamics} and \eqref{eq:lemma2_exp3}, results in: 
\begin{align}
    \R_T^\text{M} \leq \mathcal{O}\big(\eta \, T + \frac{A \log A}{\eta} + \frac{T^{\beta^{i^*}} A^{1-\beta^{i^*}}}{\eta^{1-\beta^{i^*}}} \big).\label{eq:lemma2_regret_MetBS}
\end{align}
Setting $\eta \sim T^{-z}$ and finding the $z$ that minimizes the power of $T$ in \eqref{eq:lemma2_regret_MetBS}, leads to \eqref{eq:MetBS_bound}.
\end{proof}

\subsection{Discussion for \texttt{MetBS}}

When interacting with \emph{learning algorithms} in the \emph{bandit} setting, Algorithm \ref{MetBS} is guaranteed to achieve the same performance as the best algorithm if it ran on its own (and thus, acquiring feedback in every round). Hence, \texttt{MetBS} attains reward as the (unknown) single best algorithm without making assumptions for the environment (see Lemma \ref{lemma_MetBS}). This accomplishment is made possible through minimum coordination between the meta-learner and the algorithms, as described in lines 10-11 of Algorithm \ref{MetBS}. 

\revmajor{In terms of implementation, \texttt{MetBS} can be implemented as another rApp, which also facilitates its coordination with the co-located rApps implementing the different algorithms; see also Fig. \ref{fig:architecture_right}. Regarding its overheads, due to its similarity with  \texttt{BSvBS}, its complexity depends on the number of algorithms that it chooses from, i.e., $\mathcal{O}(T|\mathcal{A}|)$ for $T$ rounds. However, as it chooses between different algorithms (where each of them selects policies and has its own complexity), the overall time complexity of \texttt{MetBS} depends on the worst-case scenario of the most time-complex algorithm. Similarly, its space complexity is equal to $\mathcal{O}(|\mathcal{A}|)$; however, an important factor is the complexity of the algorithms that it chooses from, and especially, the most space-complex algorithm.}

\section{Performance Evaluation} \label{sec:evaluation}

\subsection{Experimental Setup \& Scenarios}

\revmajor{The solutions are assessed under different traffic and environment scenarios using our recent publicly-available dataset \cite{ayala2021bayesian} with power consumption and throughput measurements from an O-RAN compatible testbed. This experimental setup includes a vBS and a UE\footnote{\revmajor{The usage of one UE is not limiting for our study, since the algorithm devises each vBS's thresholds based on the average (across users) throughput and energy, and the average CQI and traffic, i.e., the UE emulates the load of multiple users.}}, implemented as srseNB and srsUE from the srsRAN suite \cite{gomez2016srslte}.} The RUs of the vBS and UE are composed of an Ettus Research USRP B210, and their BBUs and near-RT RICs are implemented on general-purpose computers (Intel NUC BOXNUC8I7BEH). The power consumption of the BBU and RU is measured with the GW-Instek GPM-8213. A \SI{10}{\mega\hertz} band is selected, supplying a maximum capacity of approximately \SI{32}{Mbps} and \SI{23}{Mbps} for the downlink and uplink operation, respectively. \revmajor{The non-RT threshold policies are calculated in a programming language, emulating the operation of rApps; the real-time scheduling decisions are made by the default srsRAN scheduler that has been amended to comply with the MCS, PRB, and power thresholds that are provided to them in each round.}

The dataset contains \SI{32797} measurements for $|\mathcal{X}| \!=\! 1080$ {policies} corresponding to  $\mathcal{B}^{\text{d}} \!=\! \{0, 0.2, 0.6, 0.8, 1\}$,  $\mathcal{B}^{\text{u}} \!=\! \{0.01, 0.2, 0.4, 0.6, 0.8, 1\}$, $\mathcal{M}^{\text{d}}\!=\! \{0, 5,11, 16, 22, 27\}$,  $\mathcal{P}^{\text{d}} \!=\! \{3\}$\footnote{The DL transmission power is determined through the transmission gain of the USRP implementing the BS. The RU of the testbed is equipped with a fixed power amplifier that consumes \SI{3}{\watt} and a variable attenuator for power calibration. To account for this limitation, the dataset power measurements are post-processed using linear modeling \cite{ayala2021bayesian}.} and $\mathcal{M}^{\text{u}}\! =\! \{0, 5, 9, 14, 18, 23\}$. The random perturbations in this setup, as explained in Sec. \ref{sec:system_model}, emanate due to time-varying UL and DL demands, $\{d_t^{\text{u}},d_t^{\text{d}}\}_{t=1}^T$, measured in \si{Mbps}, and time-varying CQIs, $\{c_t^{\text{u}}, c_t^{\text{d}}\}_{t=1}^T$, which are dimensionless. 
\revminor{The transmitted data, $\{R_t^{\text{u}},R_t^{\text{d}}\}_{t=1}^T$, are calculated by multiplying the values of $\mathcal{B}^{\text{d}}$ ($\mathcal{B}^{\text{u}}$) with the transport block size (TBS); the latter is determined by mapping the $\mathcal{M}^{\text{d}}$ ($\mathcal{M}^{\text{u}}$) with the TBS index \cite{3gpp_36_213}. W.l.o.g., we have assumed 50 PRBs.}
The power cost function is set to $P_t(x_t) \!=\! V_t$, where $V_t$ is the total power consumed by the vBS, and the utility function as stated in \eqref{eq:reward}. W.l.o.g., we scale both components of the reward function to $[0, 1]$ and choose $\delta=1.5$ to prioritize the power consumption unless stated otherwise.  We set $\gamma\!=\!0.29$ for \texttt{BSvBS} and $\eta\!=\!0.04$ for \texttt{MetBS}, and use $T\!=\!50k$. All results are averaged over 10 independent experiments.

\revmajor{For the ensuing analysis, we assess three scenarios which represent a static environment (fixed, time-invariant parameters); a stationary stochastic environment (i.i.d. parameters); and an adversarial scenario. The latter, clearly, is an extreme case (e.g., can appear under high mobility conditions, heavy interference or attacks) that we use to demonstrate the robustness of the learning algorithms. On the other hand, the first two scenarios are in line with those typically considered by prior benchmarks, e.g., \cite{ayala2021bayesian, edgebol21}. In detail:}

\noindent $\bullet$ \textbf{\textit{Scenario A (static)}}: the demands and CQIs take the highest possible values according to our testbed, i.e., $d_t^{\text{d}} = 32,\, d_t^{\text{u}} = 23$, $c_t^{\text{d}} = 15,\, c_t^{\text{u}} = 15$.

\noindent $\bullet$ \textbf{\textit{Scenario B (stationary)}}: the demands and CQIs  \revmajor{are drawn randomly from fixed uniform distributions in each round, i.e., $d_t^{\text{d}} \sim \mathcal{U}(29, 32)$, $d_t^{\text{u}} \sim \mathcal{U}(20, 23)$, $c_t^{\text{d}},\, c_t^{\text{u}} \sim \mathcal{U}(1, 3)$, where $\mathcal U(a,b)$ denotes the uniform distribution over the interval $[a,b]$.}

\noindent $\bullet$ \textbf{\textit{Scenario C (adversarial)}}: the demands and CQIs are drawn randomly in a \emph{ping-pong} way; namely, in \emph{odd} rounds according to $d_t^{\text{d}} \sim \mathcal{U}(29, 32)$, $d_t^{\text{u}} \sim \mathcal{U}(20, 23)$, $c_t^{\text{d}},\, c_t^{\text{u}} \sim \mathcal{U}(13, 15)$, and in \emph{even} rounds from $d_t^{\text{d}},\, d_t^{\text{u}} \sim \mathcal{U}(0.01, 1)$, $c_t^{\text{d}}, c_t^{\text{u}} \sim \mathcal{U}(1, 3)$. \footnote{CQI 13 and 15 correspond to SNR of \SI{25}{\decibel} and \SI{29}{\decibel}, while CQI 1 and 3 to SNR of \SI{1.95}{\decibel} and \SI{6}{\decibel}, respectively.} \textcolor{black}{We note note that the learner does not have access to this information, and is oblivious to when the switches happen.}

\noindent Scenario C resembles dynamic environments, where the parameters might change drastically every round. This corresponds to the most challenging-to-learn \emph{adversarial} schemes in regret analysis, cf. \cite{bubeck_bandits}. Clearly, an algorithm that performs well under this case is guaranteed to perform well in all other scenarios. In the sequel, we use these scenarios to explore the convergence of the proposed learning and meta-learning algorithms, and compare them with selected state-of-the-art competitors in terms of \textit{(i)} regret, \textit{(ii)} vBS power savings, and \textit{(iii)} inference time.

\begin{figure}[!t]
    \centering
    \subfigure[]{\centering \hspace*{-1cm}\label{fig:BSvBS-static-15_15_32_23-100k_regret}\includegraphics[width=0.82\linewidth]{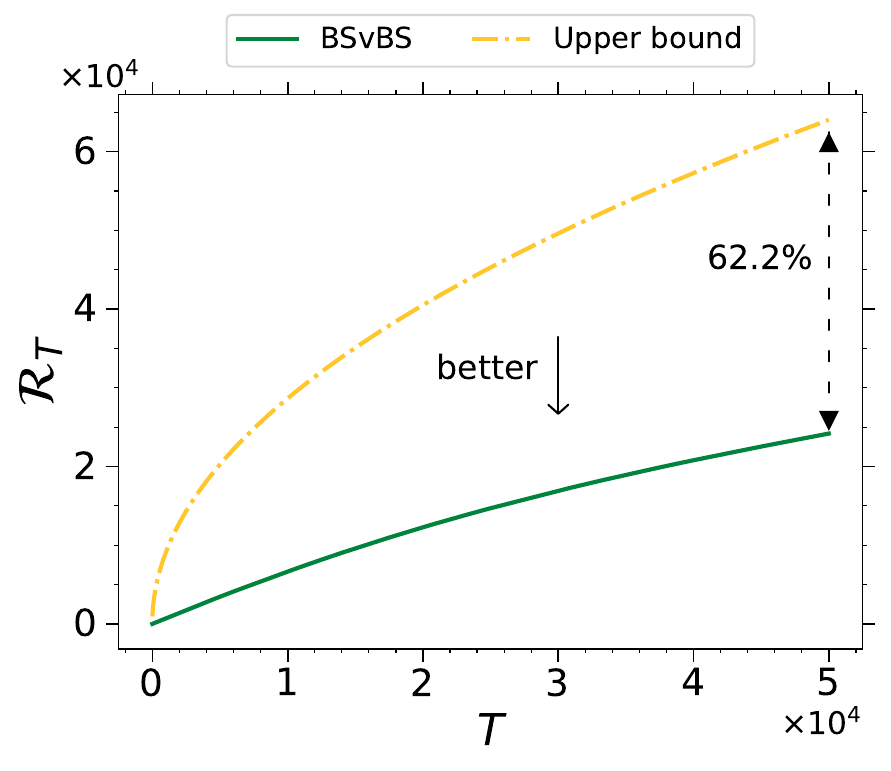}}
    \subfigure[]{\centering \label{fig:BSvBS-static-15_15_32_23-100k_heatmap}\includegraphics[width=0.85\linewidth]{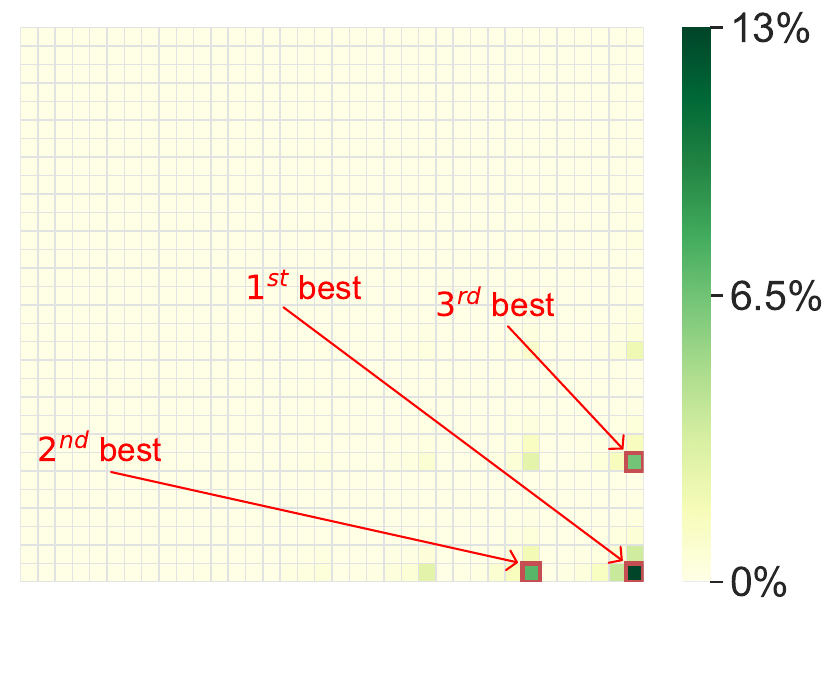}}
    \caption{\textbf{(a)} $R_T$ achieved from \texttt{BSvBS} in Scenario A (static) and its upper bound; \textbf{(b)} heatmap for the choices of \texttt{BSvBS} in Scenario A, showing the probability that each policy is chosen at $t=50k$.}
    \label{fig:static_BSvBS}
\end{figure}

\subsection{Static \& Stationary Scenarios}

\begin{figure*}[!t]
    \centering
    \subfigure[]{\label{fig:BSvBS-15_15_32_23-50k-static_MCS}\includegraphics[scale=0.34]{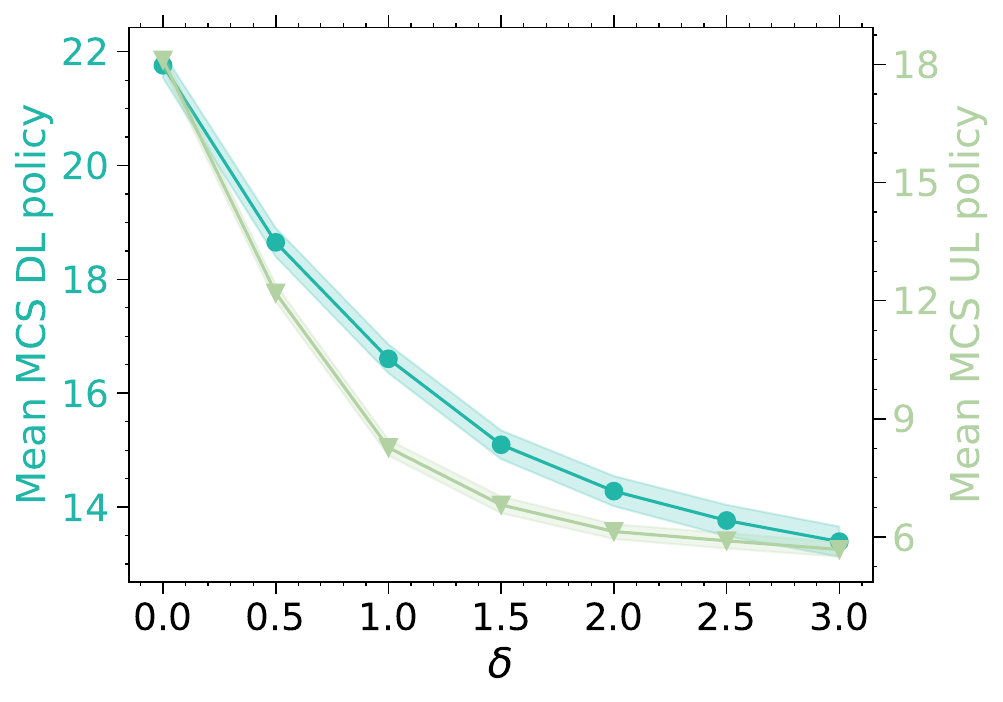}}
    \subfigure[]{\label{fig:BSvBS-15_15_32_23-50k-static_airtime}\includegraphics[scale=0.34]{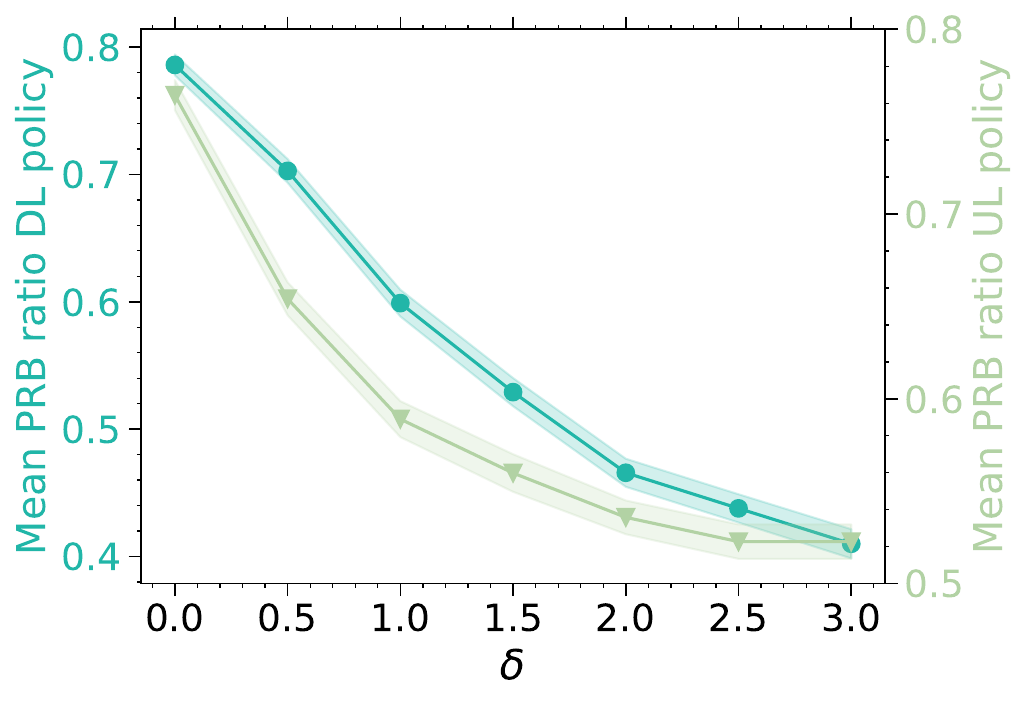}}
    \subfigure[]{\label{fig:V2_BSvBS-15_15_32_23-50k-static_power_utility.pdf}\includegraphics[scale=0.34]{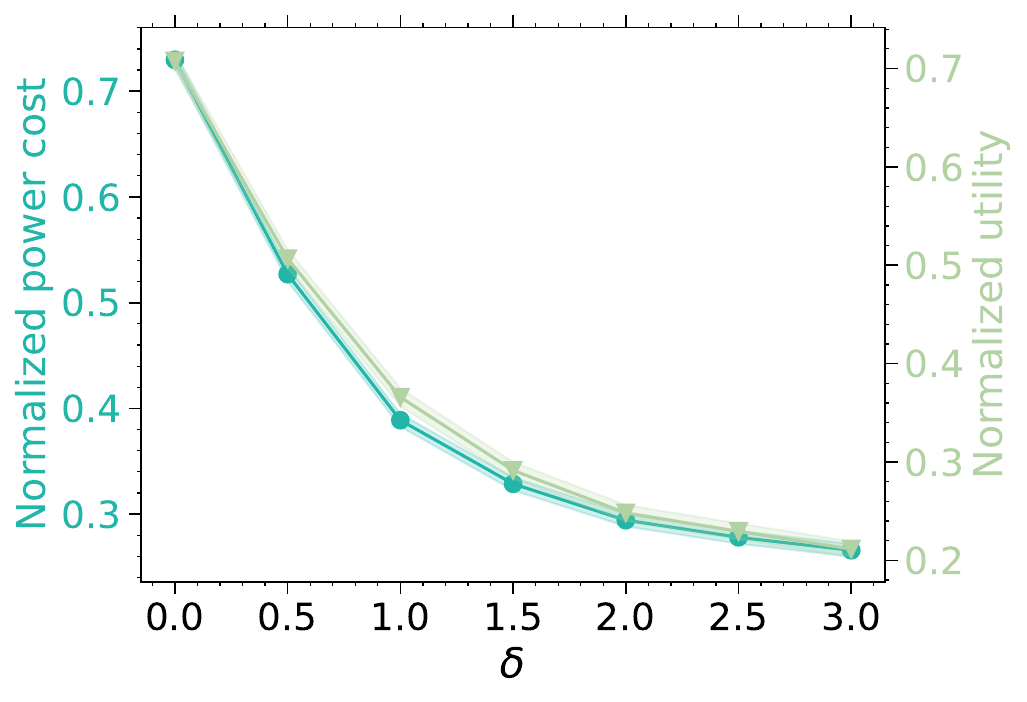}}  
    \caption{Scenario A (static) for \texttt{BSvBS}: \textbf{(a)} MCS in DL (left) / UL (right); \textbf{(b)}  PRB ratio in DL (left) / UL (right); \textbf{(c)} power (left) and utility (right) w.r.t. $\delta$, with 0.95-CI. \textcolor{black}{In each plot, the blue and green lines correspond to the left and right y-axis, respectively.}}
    \label{fig:static_config_choices}
\end{figure*}

Fig. \ref{fig:BSvBS-static-15_15_32_23-100k_regret} shows the expected regret in Scenario A when prioritizing the utility function (small $\delta$). The attained regret is sub-linear and \SI{62.2}{\percent} smaller than the upper bound (which is itself sub-linear), cf. \eqref{eq:exp3_bound}. To complement the analysis, Fig. \ref{fig:BSvBS-static-15_15_32_23-100k_heatmap} shows a grid with \SI{1080}{} cells, each mapping a different \textcolor{black}{policy}. The cells are colored based on the probability \texttt{BSvBS} selects each \textcolor{black}{policy} at $t=50k$, where darker colors indicate higher probabilities. The red squares indicate the \textcolor{black}{three} best \textcolor{black}{policies} chosen \SI{25}{\percent} of the rounds, where the top-performing one is selected twice as frequently. This outcome can be attributed to the small $\delta$, which favors the \textcolor{black}{policy} with the highest MCS and PRB ratio in both DL and UL, as the demands and CQIs are high.
For the second and third-best \textcolor{black}{policies}, the MCS in UL and DL take the highest values, except for the PRB ratios, which are fixed at 0.8, namely, the second-best UL and DL PRB ratios.

Fig. \ref{fig:BSvBS-15_15_32_23-50k-static_MCS} and \ref{fig:BSvBS-15_15_32_23-50k-static_airtime} delineate the effect of $\delta$ on the MCS DL/UL, and PRB ratio DL/UL, respectively (i.e., the chosen policies), for the static scenario. The solid lines in the plots represent the mean values averaging $100$ rounds after running \texttt{BSvBS} for $t=50k$ rounds, and the shadowed areas are the 0.95-confidence intervals. \textcolor{black}{Moreover, the blue and green lines correspond to the left and right y-axis, respectively.} We observe that smaller $\delta$ leads to higher MCS and PRB ratio choices in DL and UL. This is justified by the high CQI values considered in this scenario, as they enable using higher MCS, which allows more data transmission and larger decoding computational load \cite{rost15}. Furthermore, larger $\delta$ in Scenario A effectuates the selection of lower MCS and PRB values in order for the vBS to save resources by diminishing the turbo decoding iterations.

Similarly, Fig. \ref{fig:V2_BSvBS-15_15_32_23-50k-static_power_utility.pdf} illustrates the impact of $\delta$ on the reward function, where its two components are normalized, see \eqref{eq:reward}. Higher $\delta$ boosts the usage of policies that minimize the consumed power, forcing the utility function to decrease, whereas lower $\delta$ leads to \textcolor{black}{policies} that maximize the utility but increase the power consumption. Values $\delta>2$ have less effect on the power and utility functions, as there is a limit in the consumed power that can be saved. 

\begin{figure}[!t]
    \centering
    \centering\label{fig:V2_BSvBS-3_3_32_23-50k-stationary_regret}\includegraphics[width=0.85\linewidth]{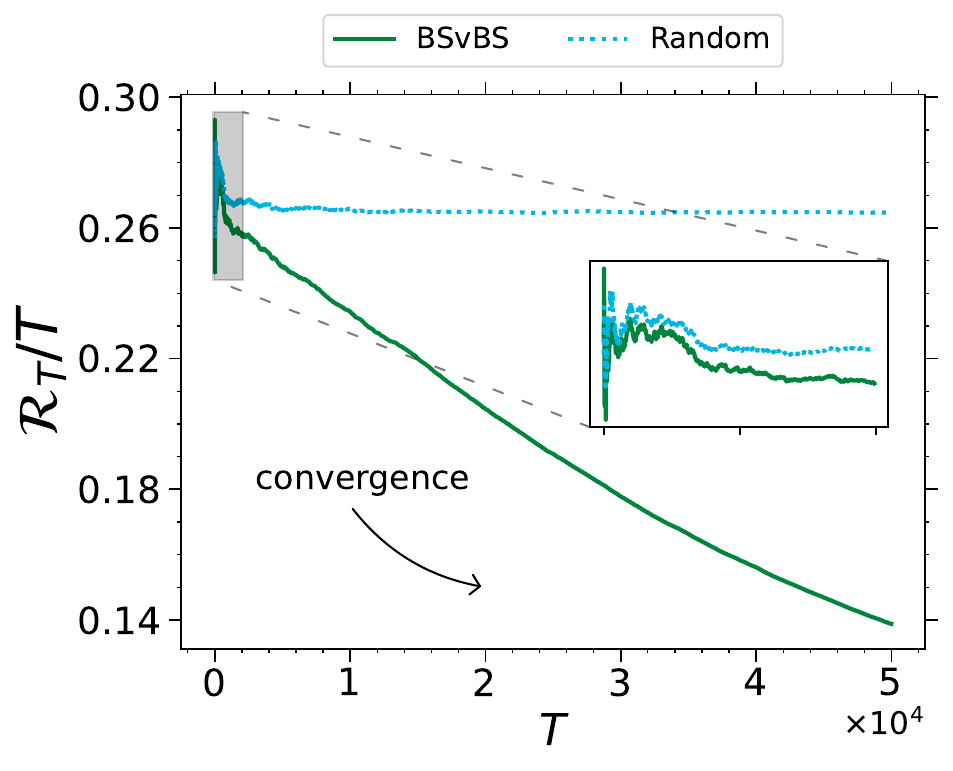}
    \caption{$R_T/T$ for \texttt{BSvBS} in Scenario B (stationary), together with \texttt{Random}, a naive algorithm that selects policies randomly.}
    \label{fig:stationary_BSvBS}
\end{figure}

Fig. \ref{fig:V2_BSvBS-3_3_32_23-50k-stationary_regret} depicts the average regret over time for stationary Scenario B, which converges towards zero as time elapses. We also plot the average regret of a typical benchmark that randomly selects \textcolor{black}{policies} with equal probability; we call this benchmark \texttt{Random}. \texttt{BSvBS} explores \textcolor{black}{policies} with probability \SI{29}{\percent} (since $\gamma=0.29$) and exploits the best-performing ones with probability \SI{71}{\percent}. Therefore, in the first $800$ rounds, \texttt{BSvBS} obtains similar regret as the benchmark algorithm, but their performance difference grows gradually, reaching  \SI{33.3}{\percent} in round $t=50k$, as \texttt{BSvBS} opts for the best-performing \textcolor{black}{policies} with higher probability at latter stages.

\noindent\textbf{\textit{{Key takeaways}}}: \textit{(i)} The measured regret is sub-linear in static and stationary scenarios and substantially smaller (up to \SI{62.2}{\percent}) than the theoretical bound. \textit{(ii)} The network can adjust $\delta$ to trade certain power consumption with commensurate losses in utility; yet, increasing $\delta$ more than a specific value ($\delta\!=\!2$ in our case) does not provide further substantial savings.

\begin{figure}[!t]
    \centering
    \subfigure[]{\label{fig:gap_prior_work_regret}\includegraphics[width=0.81\linewidth]{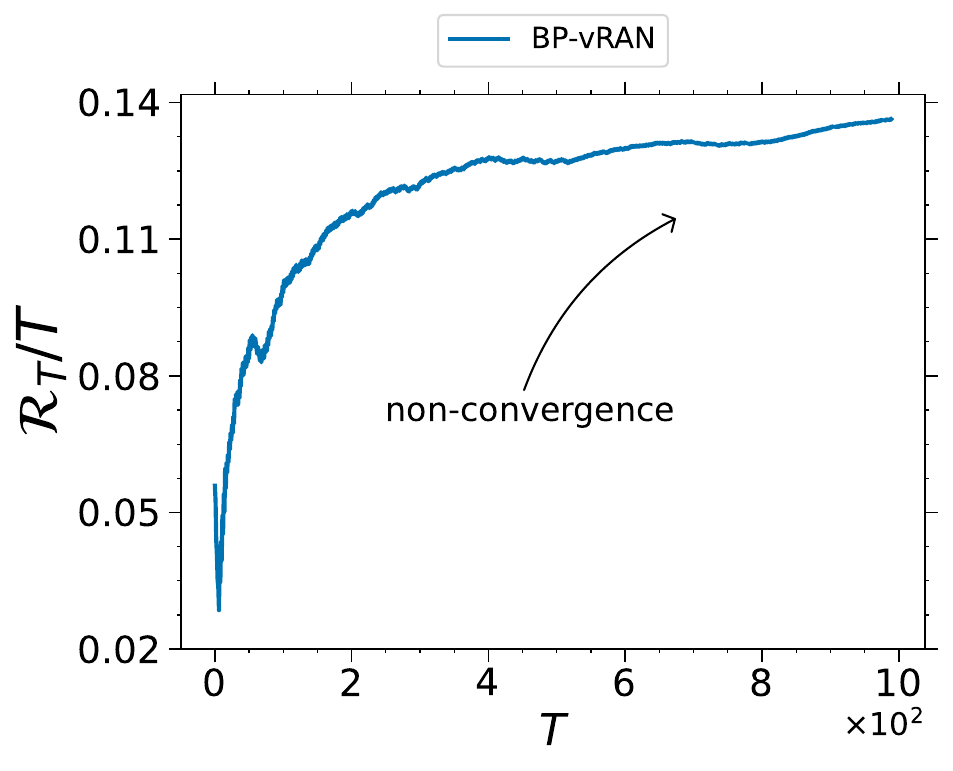}}
    \subfigure[]{\label{fig:gap_prior_work_choices}\includegraphics[width=0.79\linewidth]{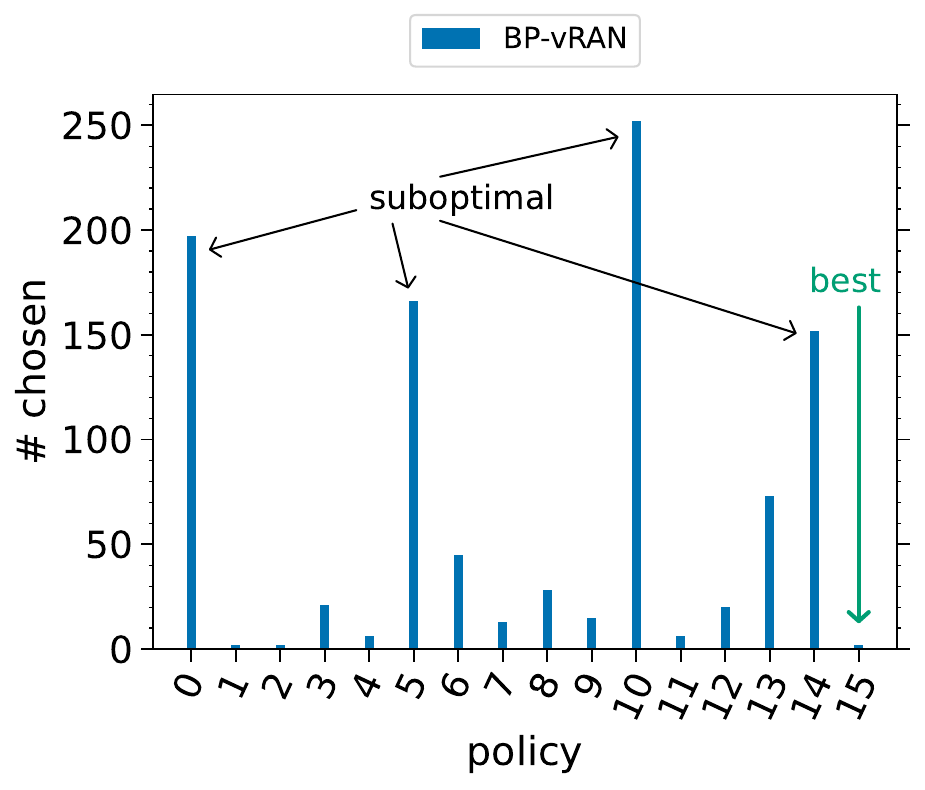}}
    \caption{\texttt{BP-vRAN} executed for $T=1000$ rounds in dynamic Scenario C, in a subset of the \textcolor{black}{policy} space: \textbf{(a)} $R_T/T$; \textbf{(b)} number of times each policy is chosen.}
    \label{fig:gap_prior_work}
\end{figure}

\subsection{Gap in Prior Work}

The primary objective is to showcase how state-of-the-art techniques perform inadequately in \textcolor{black}{challenging} environments. To delineate this effect, we focus on a smaller set of policies, i.e., $|\mathcal{M}_{d}|\!=\! |\mathcal{M}_{u}|\! =\! |\mathcal{B}_{u}| \!=\! |\mathcal{B}_{d}| \!=\! 2$ and $|\mathcal{P}_{d}| \!=\! 1$, yielding $|\mathcal{X}| \!=\! 16$ \textcolor{black}{policies}. The performance of the \texttt{BP-vRAN} algorithm \cite{ayala2021bayesian}, which constitutes, to the best of the authors' knowledge, the only existing work designed to configure such threshold policies in vBS, is assessed in adversarial Scenario C. \texttt{BP-vRAN}, which is based on the seminal GP-UCB algorithm \cite{Krause-GPUCB-ICML10}, models the \textcolor{black}{traffic} demands and CQIs as \emph{context}, which are observed before the policy is decided. Given that the context directly impacts the selection of policies, it will be shown how abrupt changes in CQI values and traffic demand deteriorate the algorithm performance. \textcolor{black}{We present an example where
the context differs between its observation and application to
the system.} This case appears quite often in practice, given that the rounds of reference are of several seconds. For the plots in this section, the reward function $f_t(x_t)$ is unbounded.\footnote{When \texttt{BSvBS} is depicted in the same plot as \texttt{BP-vRAN}, the reward function of \texttt{BP-vRAN} is scaled too.}

As indicated in Fig. \ref{fig:gap_prior_work_regret}, the average regret in the adversarial Scenario C does not decrease (in fact, it increases) after $T=1k$ rounds, which is more than 33$\times$ of the advertised convergence time. This happens because the algorithm takes decisions in each $t$ by assuming perfect knowledge of $f_t$, which might take arbitrary values depending on the  \textcolor{black}{environment}. Clearly, due to the system's volatility, the policy for each $t$ should be selected based on past values $\{f_\tau(x_\tau)\}_{\tau=1}^{t-1}$; yet, as Fig. \ref{fig:gap_prior_work_choices} corroborates, \texttt{BP-vRAN} selects sub-optimal \textcolor{black}{policies} for most rounds and fails to explore efficiently even this small space. 

\begin{figure}[!t]
    \centering
    \subfigure[]{\label{fig:V2_BSvBS-dynamic-50k_regret}\includegraphics[width=0.79\linewidth]{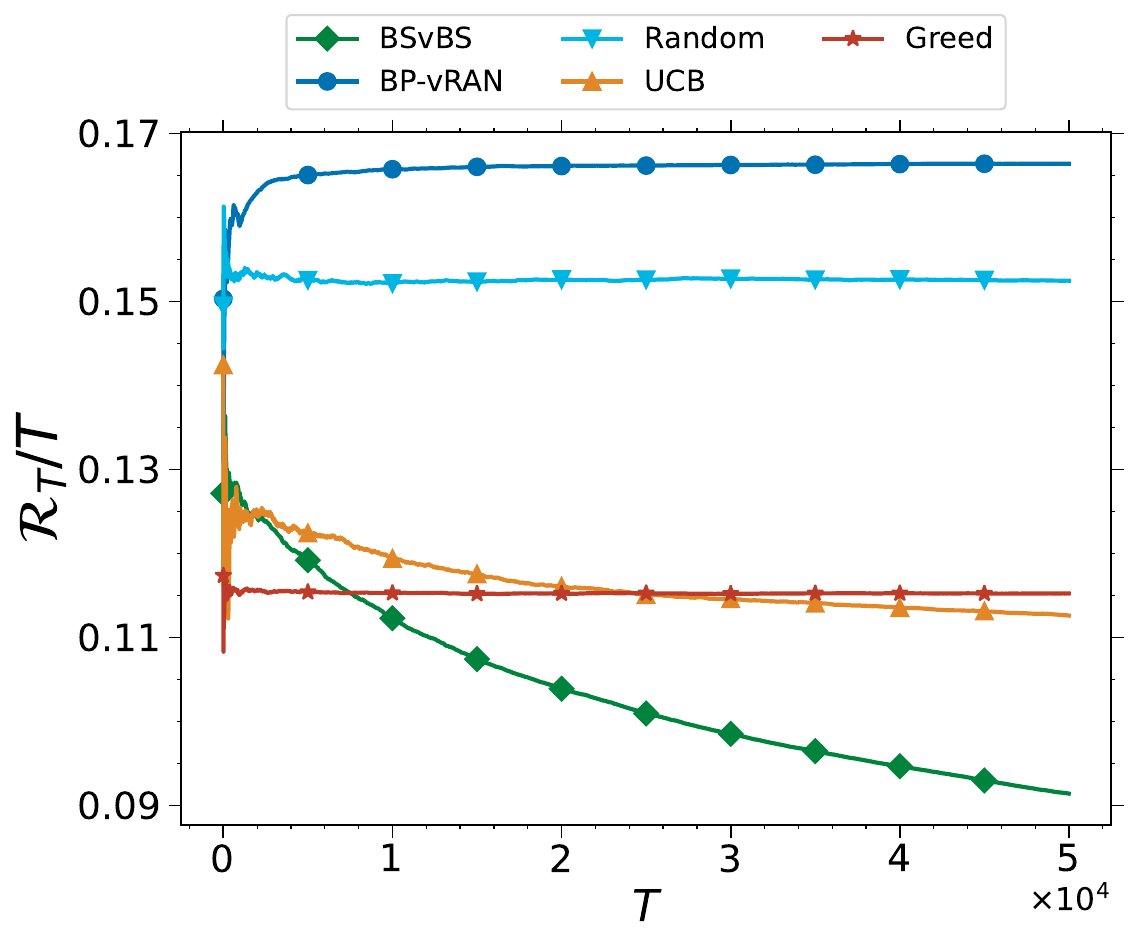}} 
    \subfigure[]{\label{fig:V2_power_gap-dynamic-50k}\includegraphics[width=0.78\linewidth]{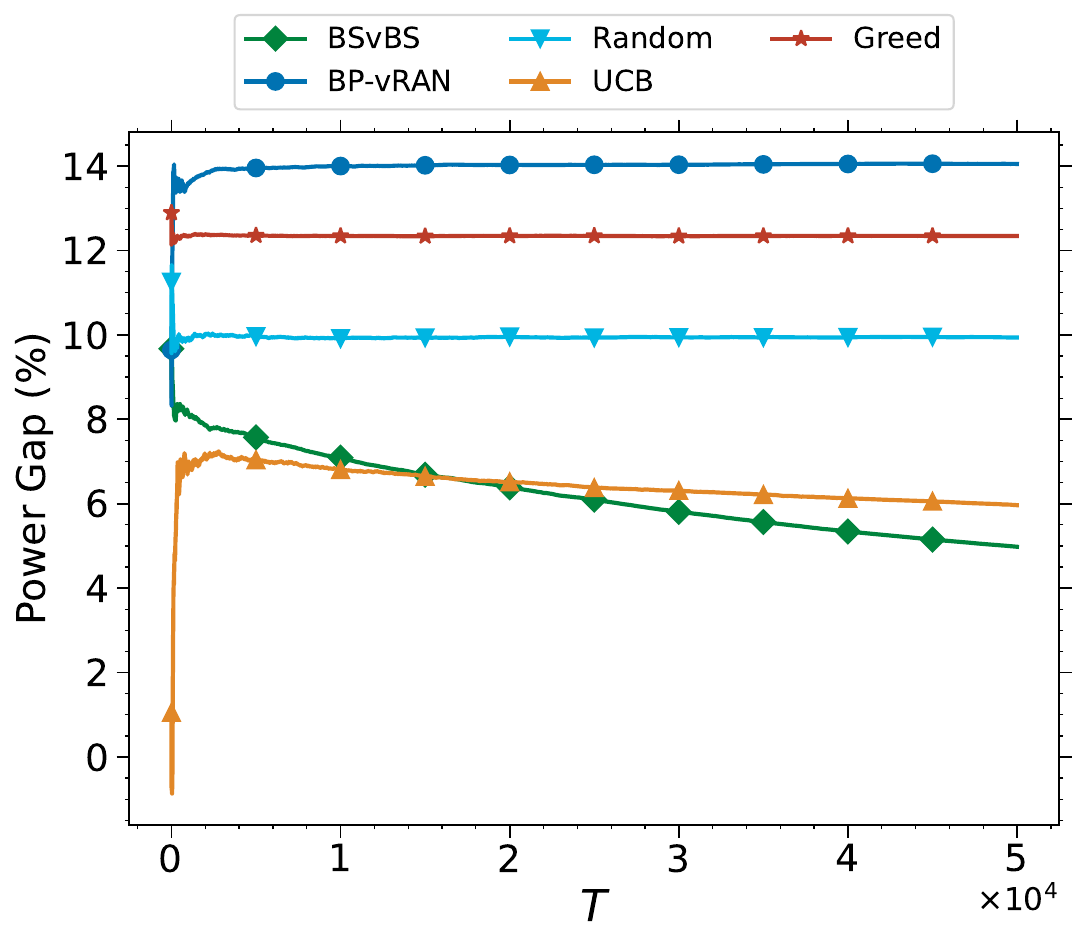}} 
    \caption{\revmajor{Comparison of \texttt{BSvBS} with several competitors in adversarial Scenario C: \textbf{(a)} $R_T/T$; \textbf{(b)} power saving of each algorithm with respect to the ideal-minimum energy of the benchmark.}} 
    \label{fig:dynamic_BSvBS}
\end{figure}

\subsection{Evaluation of the Bandit Algorithm}

\revmajor{Fig. \ref{fig:V2_BSvBS-dynamic-50k_regret} compares the average regret over time of \texttt{BSvBS} for Scenario C, in relation to several competitor algorithms, namely: the \texttt{BP-vRAN}; a naive algorithm that selects thresholds uniformly randomly (\texttt{Random}); the classical \texttt{UCB} algorithm that is designed for stationary environments \cite{auer_ucb}; and a greedy algorithm that prioritizes exploitation (\texttt{Greed}, selects the best solution found until now) \cite{sutton_RL}. We consider $T=50k$ rounds and use the complete {policy} space (i.e., $|\mathcal{X}|=1080$), and all results are averaged over $10$ independent experiments. We observe that \texttt{BSvBS} is superior, acquiring 45.1\% less regret w.r.t \texttt{BP-vRAN}, and 22\% less w.r.t \texttt{Greed} and \texttt{UCB} at $t = 50k$. It is worth noting that \texttt{Random} performs better than \texttt{BP-vRAN} in this case, by approximately 9\%. 

In Fig. \ref{fig:V2_power_gap-dynamic-50k}, we present the vBS power gains that each algorithm achieves in the same scenario, w.r.t. the ideal-minimum-energy of the benchmark, where the power consumption of the idle user is subtracted. It can be seen that with \texttt{BSvBS}, the network operator can save up to 64.5\% of energy if the algorithm runs for $t = 50k$ rounds in contrast to \texttt{BP-vRAN}. Moreover, it can be seen that \texttt{UCB} also chooses policies that allow for saving energy, but again, attains less energy saving than \texttt{BSvBS}. These plots also showcase that the \texttt{Greed} algorithm, which does not explore new policies, is not competitive and is stuck in exploiting sub-optimal policies (straight line in the regret plot).}

Another key advantage of \texttt{BSvBS} is its low inference time, i.e., the time to deduce a \textcolor{black}{policy} in each round. Fig. \ref{fig:V2-inference time} exhibits the average inference time and compares it with \texttt{BP-vRAN}. Using standard kernel-based methods (as \texttt{BP-vRAN} does) is widely recognized to result in a high computational cost of $\mathcal{O}(t^3)$ with respect to the number of data points $t$ \cite{vakili22}. This is a significant limitation as it delays the vBS operation to more than \SI{10}{\second} after $t=1k$ when tested on an Apple M1 chip with 8-core CPU@\SI{3.2}{\giga\hertz}. Clearly, this hinders the vBS operation, which will then have to rely on stale information. On the other hand, we notice that \texttt{BSvBS} requires no more than \SI{0.08}{\milli\second} to decide a policy, which remains constant throughout.

\begin{figure}[!t]
    \centering
    \label{fig:V2-inference time}\includegraphics[width=0.83\linewidth]{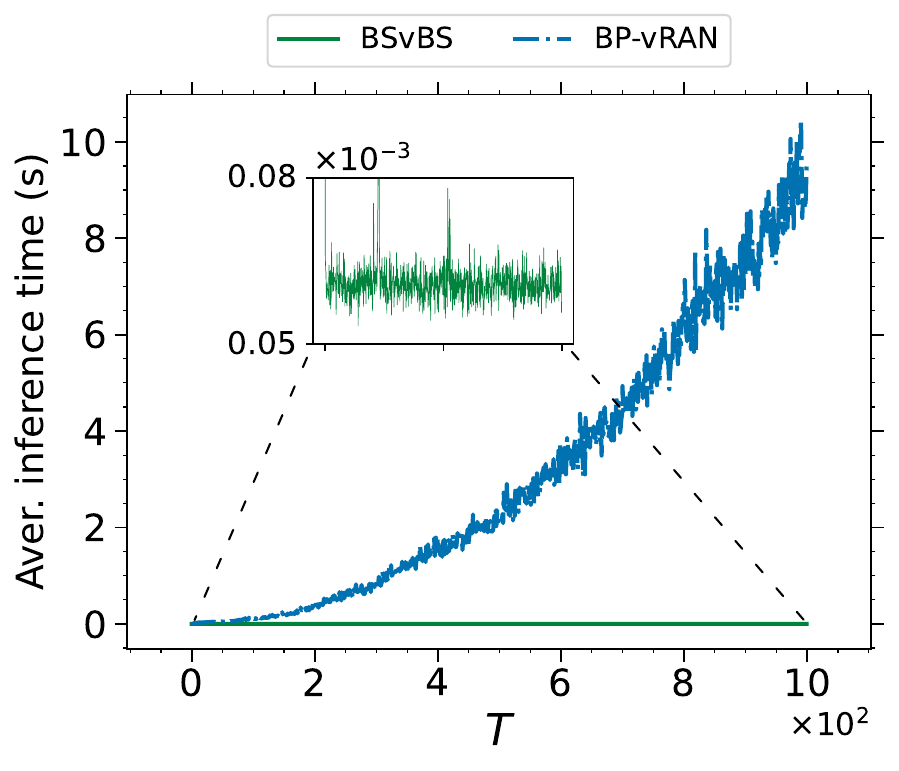} 
    \caption{Average time needed to infer a policy in each round, for our algorithm \texttt{BSvBS}, and its main competitor, \texttt{BP-vRAN}.}
    \label{fig:inference_time_BSvBS}
\end{figure}

\begin{figure*}[!t]
    \centering
    \subfigure[]{\label{fig:V2_LIL_dynamic_regret}\includegraphics[scale=0.29]{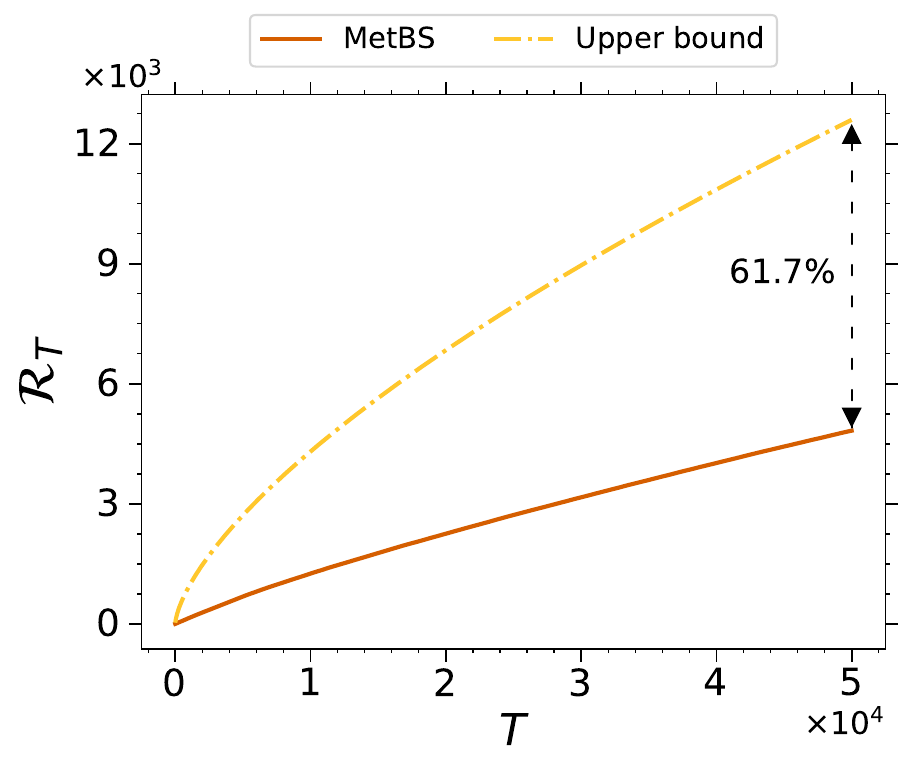}} 
    \subfigure[]{\label{fig:V2_LIL_dynamic_choices}\includegraphics[scale=0.29]{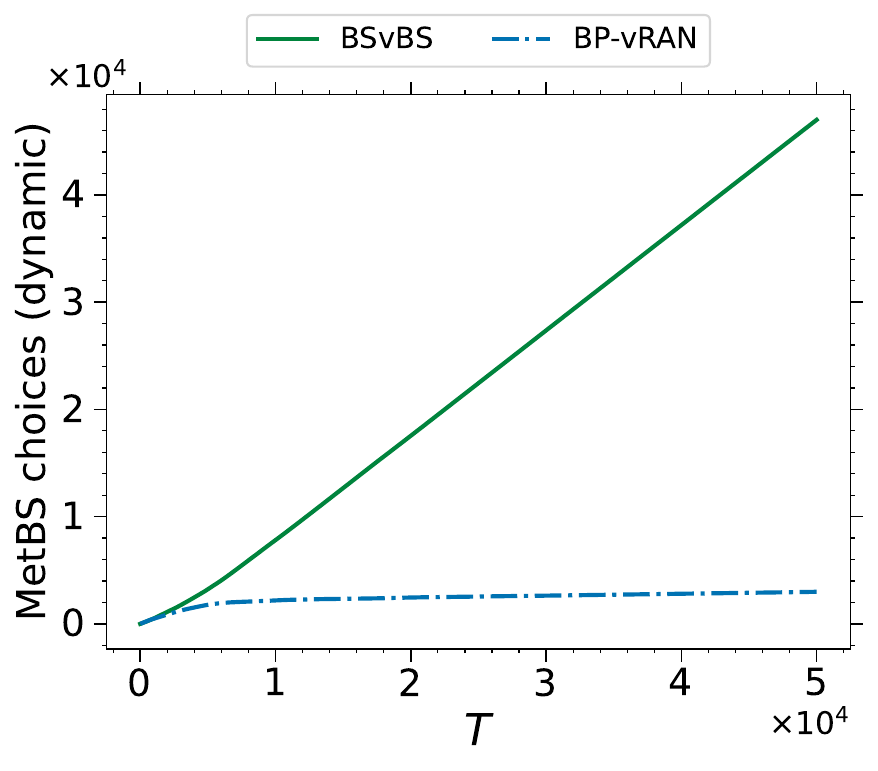}}
    \subfigure[]{\label{fig:V2_LIL_stationary_regret}\includegraphics[scale=0.29]{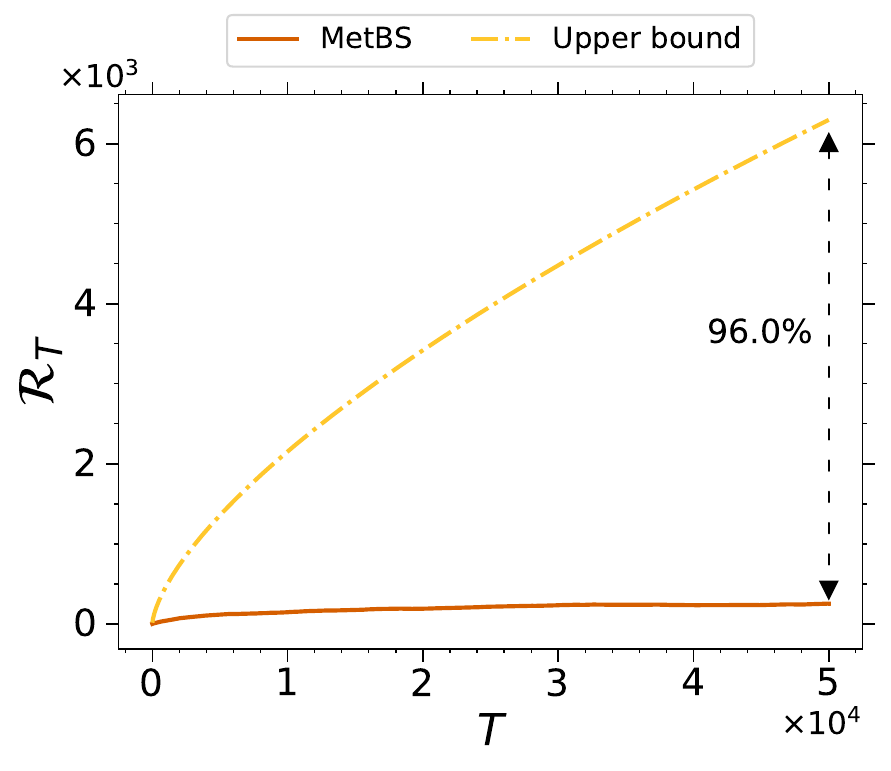}} 
    \subfigure[]{\label{fig:V2_LIL_stationary_choices}\includegraphics[scale=0.29]{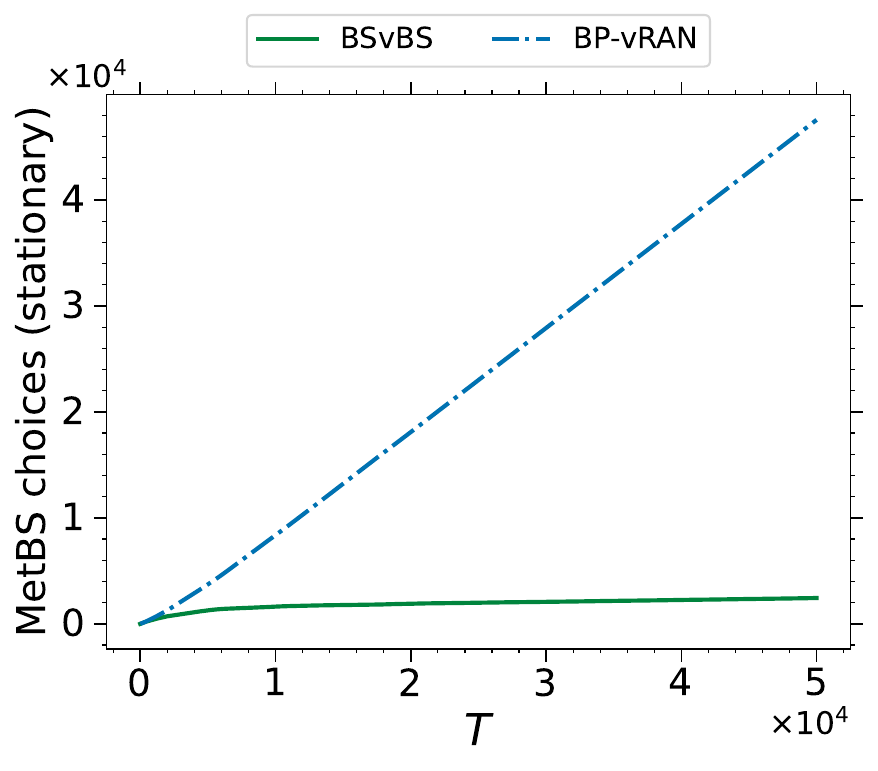}}
    \caption{\textcolor{black}{Meta-learning} algorithm: $R_T$ and the upper bound for dynamic \textbf{(a)} and stationary \textbf{(c)} scenarios; number of times \texttt{BSvBS} and \texttt{BP-vRAN} are chosen in $T=50k$ rounds for dynamic \textbf{(b)} and stationary \textbf{(d)} scenarios.}
    \label{fig:LIL}
\end{figure*}

\noindent\textit{\textbf{{Key takeaways}}}: In  \textcolor{black}{challenging} (i.e., non-stationary / adversarial) environments, decisions for configuring the vBS should be taken based on past performance. Requiring \emph{perfect} knowledge of the environment could lead to sub-optimal policies, increasing power costs up to \SI{64.5}{\percent} for operators. \texttt{BSvBS}'s performance is robust to such adversarial scenarios and outperforms a state-of-the-art algorithm in terms of: \textit{(i)} the average  regret (up to \SI{45.1}{\percent} superiority),  \textit{(ii)} the power gap w.r.t. the minimum vBS energy consumption (up to \SI{64.5}{\percent} superiority), and  \textit{(iii)} inference time (solely \SI{0.08}{\milli\second}). \textcolor{black}{We recall that \texttt{BSvBS} does not have access to how and when the demands and CQI change.}

\subsection{Evaluation of the \textcolor{black}{Meta-Learning Algorithm}}

We consider $A\! = \!2$ with \texttt{BP-vRAN}, and \texttt{BSvBS} that select \textcolor{black}{policies} from $\mathcal{X}$. On the one hand, if the context is not available at the beginning of each round, as happens in several real-world applications, \texttt{BSvBS} is superior and \texttt{BP-vRAN} fails, as seen in Sec. \ref{sec:evaluation}. Hence, \texttt{MetBS} opts mainly for \texttt{BSvBS}. The attained regret (Fig. \ref{fig:V2_LIL_dynamic_regret}) is by \SI{61.7}{\percent} less than the upper bound, which implies the desired sub-linear regret. The algorithms that \texttt{MetBS} chooses can be verified in Fig. \ref{fig:V2_LIL_dynamic_choices}, where \texttt{BSvBS} is selected in approximately $47k$ rounds, while the sub-optimal \texttt{BP-vRAN} in the remaining $3k$ rounds ($T=50k$). On the other hand, if the environment is \textcolor{black}{{easy}}, \texttt{BP-vRAN} is expected to converge faster than \texttt{BSvBS}; and, as a consequence, to be preferred by the meta-learner. Indeed, the regret of \texttt{MetBS} (Fig. \ref{fig:V2_LIL_stationary_regret}) is \SI{96}{\percent} lower than the upper bound stated in \eqref{eq:exp3_bound}, which clearly indicates the expected sub-linear regret has been achieved. \texttt{MetBS} selects \texttt{BP-vRAN} in roughly $46k$ rounds, while \texttt{BSvBS} in $4k$ rounds (Fig. \ref{fig:V2_LIL_stationary_choices}). It is important to heed that \texttt{BSvBS} converges as well to the optimal \textcolor{black}{policy} but slower (see Fig. \ref{fig:static_BSvBS} and Fig. \ref{fig:stationary_BSvBS}), an unavoidable side-effect of its robustness under any environment (even adversarial).

Finally, we test the meta-learner in a ``mixed'' environment, where, in the first $5k$ rounds the demands and CQIs are drawn from Scenario B (stationary), and in the remaining $45k$ rounds from Scenario C (adversarial). Fig. \ref{fig:V2_LIL_change_util} depicts the average rewards of \texttt{MetBS}, \texttt{BSvBS}, and \texttt{BP-vRAN}. It can be viewed that before the change of the environment, the average reward of the meta-learner follows closer to the reward of \texttt{BP-vRAN}; the orange dotted line is \SI{3.8}{\percent} lower than the blue dash-dotted line. The same can be verified from Fig. \ref{fig:V2_LIL_change_probs}, where \texttt{BP-vRAN} is chosen with higher probability, \SI{58}{\percent}, before  $t=5k$. When the change occurs, \texttt{MetBS} does not opt immediately for \texttt{BSvBS}, as the average reward of \texttt{BP-vRAN} is still higher, until the change-point at roughly $t=8k$, which is shown with a red dot in Fig. \ref{fig:V2_LIL_change_util}. After this round, \texttt{BSvBS} experiences larger reward values \textcolor{black}{on average}, and within less than $1k$ rounds (i.e., \SI{2}{\percent}), \texttt{MetBS}  starts indeed selecting \texttt{BSvBS} more frequently (up to \SI{88.2}{\percent}).

\begin{figure}[!t]
    \centering
    \subfigure[]{\label{fig:V2_LIL_change_util}\includegraphics[width=0.83\linewidth]{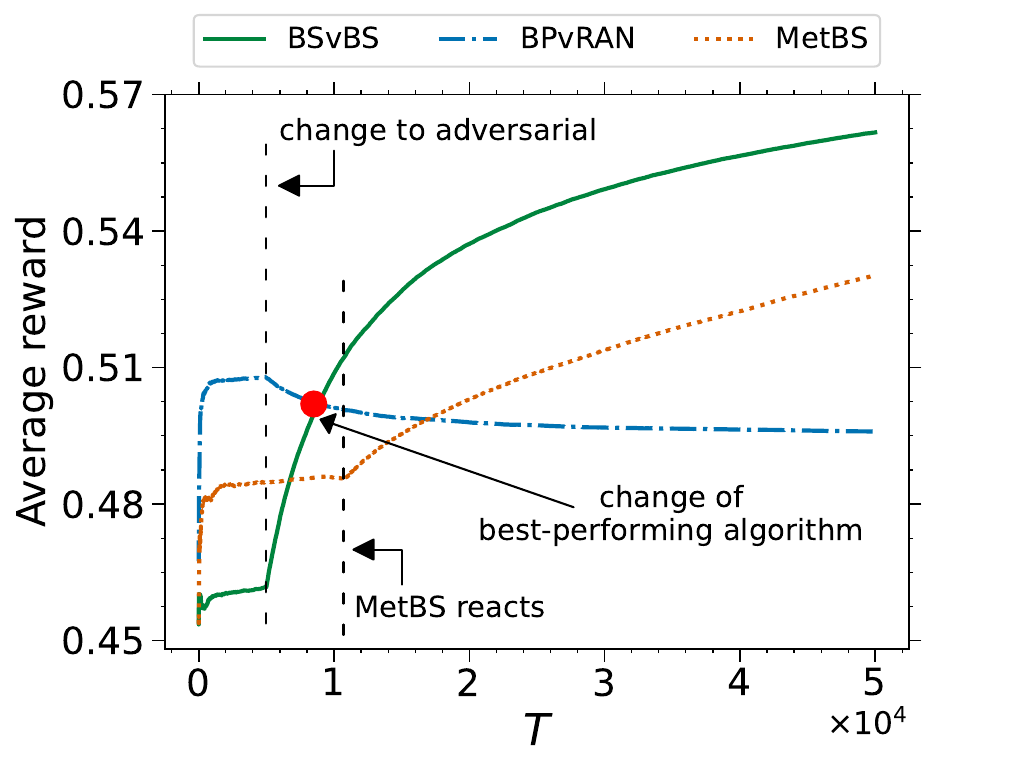}} 
    \subfigure[]{\label{fig:V2_LIL_change_probs}\includegraphics[width=0.76\linewidth]{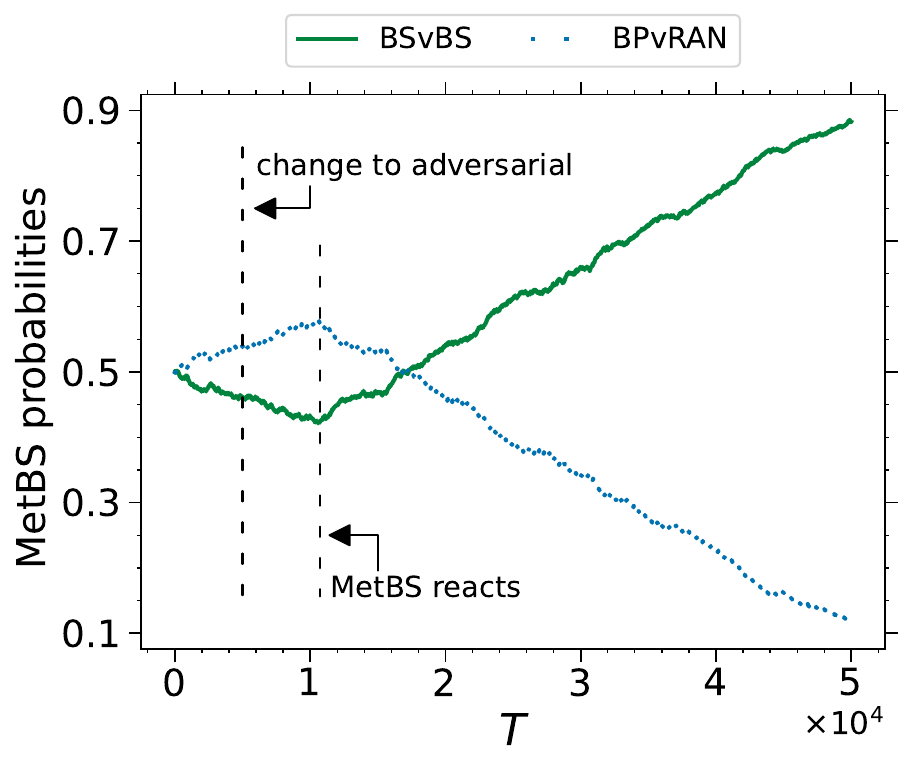}}
    \caption{\textbf{(a)} Average \textcolor{black}{reward} and \textbf{(b)} probabilities that \texttt{BSvBS} and \texttt{BP-vRAN} are chosen by the \texttt{MetBS} when the environment changes (leftmost dashed line, \textit{change \textcolor{black}{to adversarial}}) from stationary to adversarial at $t=5k$. The red dot (\textit{\textcolor{black}{change of best-performing algorithm}}) shows the change of the best-performing algorithm (from \texttt{BP-vRAN} to \texttt{BSvBS}), and the rightmost dashed line \texttt{MetBS} reacts) depicts the round after which \texttt{MetBS} \textcolor{black}{starts choosing} the best-performing \textcolor{black}{algorithm}, \texttt{BSvBS}, more often.}
    \label{fig:LIL_change}
\end{figure}

\noindent\textbf{\textit{{Key takeaways}}}: \texttt{MetBS}  chooses the best-performing algorithm for each scenario. When the demands and CQIs are drawn from a stationary distribution, it prioritizes \texttt{BP-vRAN} (\SI{92}{\percent} of rounds), while in adversarial scenarios, it follows \texttt{BSvBS} (\SI{94}{\percent} of rounds). In mixed scenarios, \texttt{MetBS} tracks \textcolor{black}{and applies} the changes after only \SI{2}{\percent} of rounds.

\section{Conclusions and Future Work} \label{sec:conclusions}

\noindent The virtualization of base stations and the design of O-RAN systems are instrumental for the success of the next generation of mobile networks. \textcolor{black}{Allocating resources for these vBSs by choosing policies} that operate on a longer time scale and do not require intervention on the (often proprietary) vBS node implementations is a new and promising network control approach. However, in order to be practical and successful, \textcolor{black}{the proposed solutions} should have low overhead and \textcolor{black}{require no assumption about} the future \textcolor{black}{channel qualities} and \textcolor{black}{traffic} demands (i.e., the environment). 

\textcolor{black}{The first proposed scheme} possesses exactly these properties, building on a tailored adversarial learning algorithm that has minimal overhead and can run in sub-milliseconds. In line with prior works, we focus on the important metrics of throughput and energy consumption and explore their trade-offs in a range of scenarios with experimental datasets. As this robustness comes at a cost for convergence speed, we aim to increase the latter in \textcolor{black}{{easy}} scenarios, where the \textcolor{black}{environment is} known beforehand (or changes slowly), through a meta-learning scheme that combines a mix of algorithms, including our own, and delineates the best-performing one at runtime. This creates a best-of-both-worlds solution. Our extensive data-driven experiments demonstrate energy savings up to \SI{64.5}{\percent} compared to state-of-the-art competitors. 

We highlight that the regret depends on the number of possible policies up to a square-root factor. While their number is expected to be smaller than the number of policies applied to the RT O-RAN level, this finding still points to an interesting direction for further reducing this dependency. \revminor{Finally, exploring the interactions between rApps and xApps in a real-time setting and assessing their impact on network performance is an interesting direction for future work, by expanding, e.g., \cite{openrangym}.}

\section*{Acknowledgments} \label{sec:acknowledgments}

This work was supported by the European Commission through Grant No. 101139270 (ORIGAMI).

\appendices
\ifCLASSOPTIONcaptionsoff
  \newpage
\fi
\bibliography{references.bib}
\bibliographystyle{IEEEtran}

\vspace{-5mm}
\begin{IEEEbiography}[{\includegraphics
[width=1in,height=1.25in,clip,
keepaspectratio]{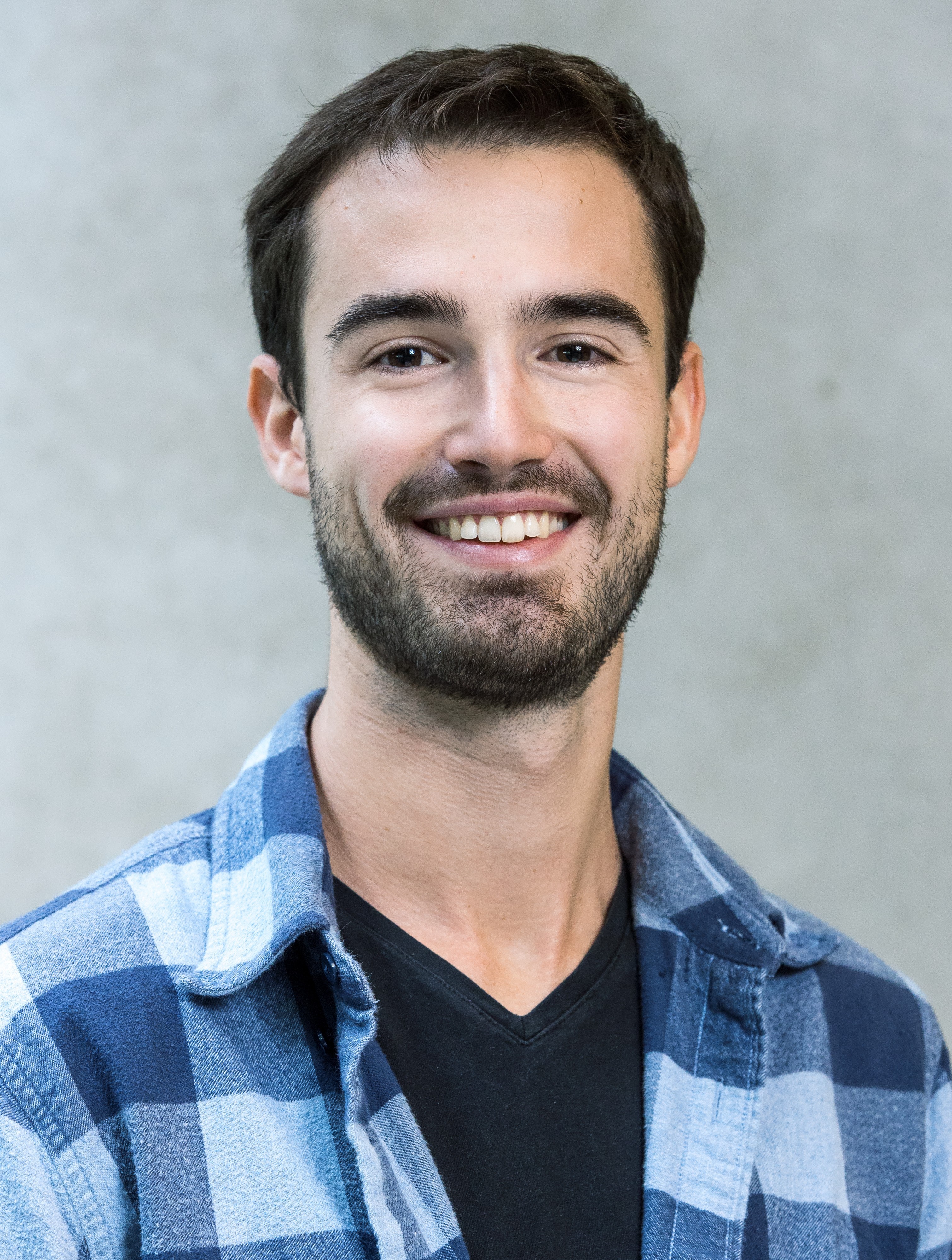}}]
{Michail Kalntis}(Graduate Student Member, IEEE) received the Diploma (5-year joint) degree with the highest possible distinction in Electrical and Computer Engineering (major in Computer Science), from the National Technical University of Athens, in 2021. During his studies, he was a recipient of the Merit Scholarship from Propondis Foundation. He is currently pursuing the Ph.D. degree in Artificial Intelligence \& Machine Learning for Optimization in Wireless Networks with the Delft University of Technology, the Netherlands. His research interests lie in Machine Learning, Network Optimization, and Mobile Networks.
\end{IEEEbiography}

\begin{IEEEbiography}[\vspace{-5mm}{\includegraphics
[width=1in,height=1.25in,clip,
keepaspectratio]{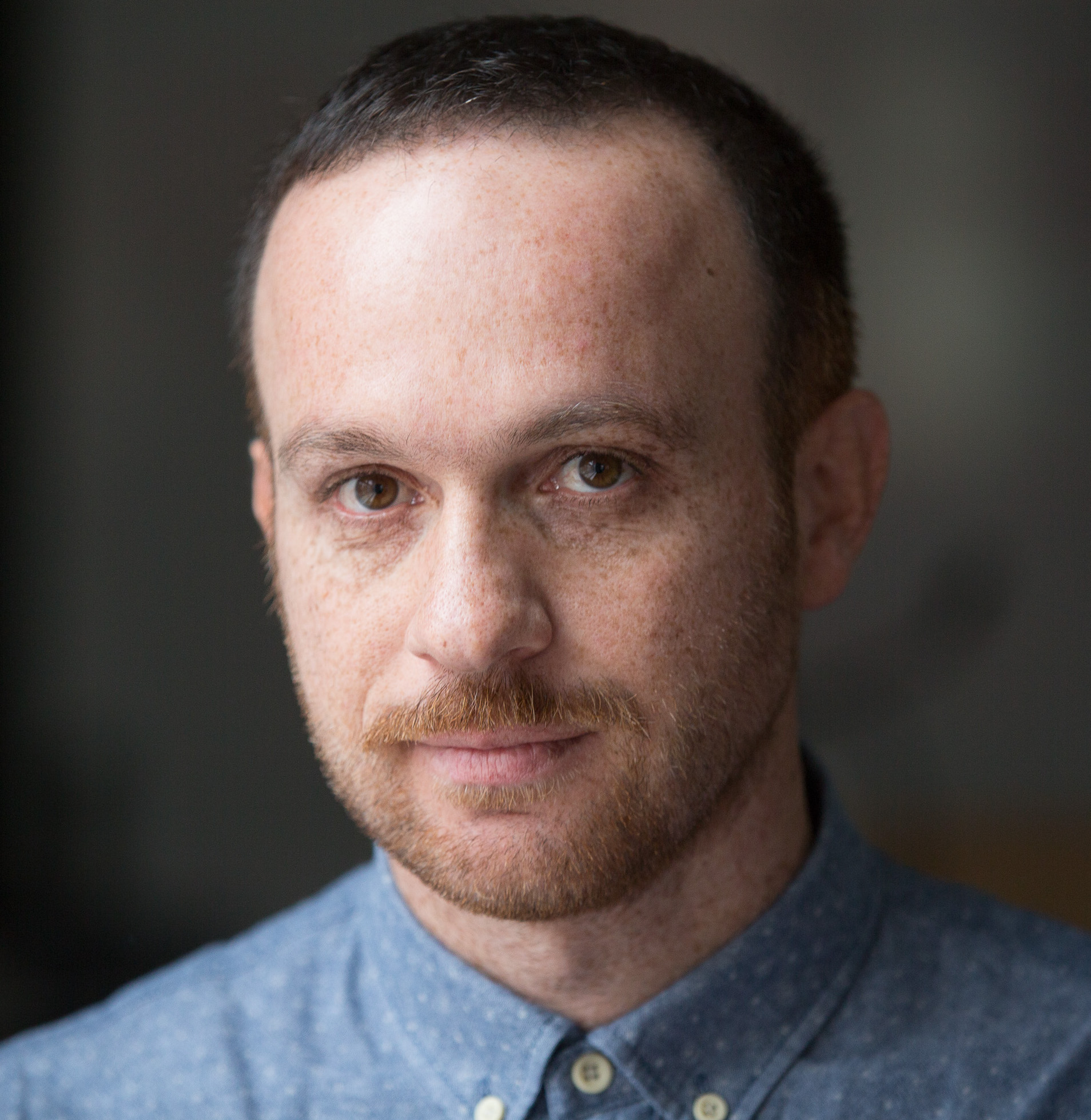}}]
{George Iosifidis}(Member, IEEE) received the Diploma degree in electronics and telecommunications engineering from the Greek Air Force Academy, Athens, in 2000, and the Ph.D. degree from the University of Thessaly in 2012. He was an Assistant Professor with Trinity College Dublin from 2016 to 2020 and a Research Scientist with Yale University (2015-2017). He is currently an Associate Professor with the Delft University of Technology. His research interests lie in the broad area of network optimization and economics; more information can be found at \url{ https://www.futurenetworkslab.net/}.
\end{IEEEbiography}

\begin{IEEEbiography}[{\includegraphics
[width=1in,height=1.25in,clip,
keepaspectratio]{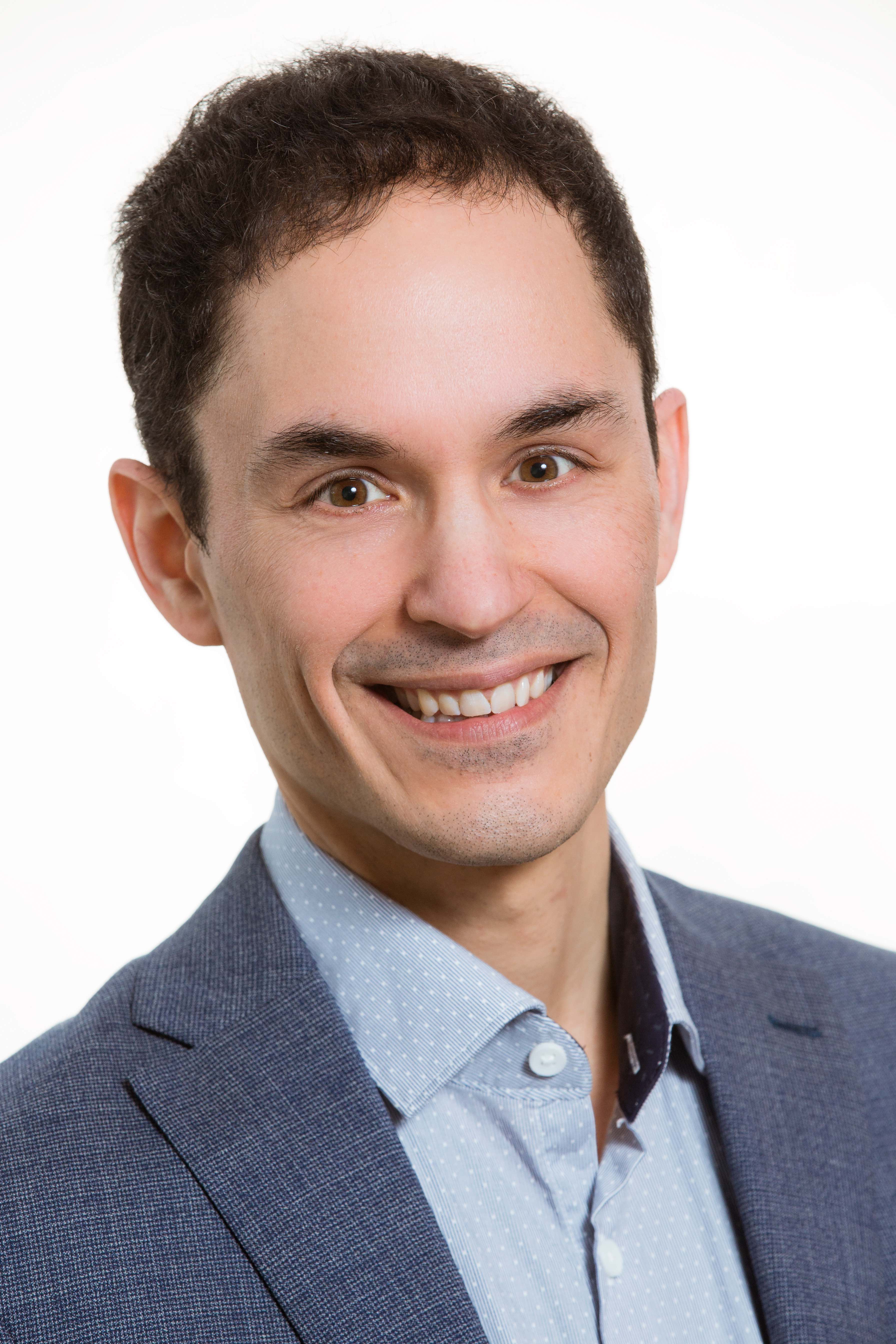}}]
{Fernando A. Kuipers}(Senior Member, IEEE) is a Full Professor at Delft University of Technology (TU Delft), where he established and leads the Networked Systems group and the Lab on Internet Science. He was a Visiting Scholar at Technion, Israel Institute of Technology, in 2009, and Columbia University, New York City, in 2016. His research comprises network optimization, network resilience, quality-of-service, and quality-of-experience and addresses problems in computer networks, software-defined networking, 6G, and Internet-of-Things. His work on these subjects led to a PhD degree conferred Cum Laude in 2004, the highest possible distinction at TU Delft, and includes distinguished papers at IEEE INFOCOM 2003, Chinacom 2006, IFIP Networking 2008, IEEE FMN 2008, IEEE ISM 2008, ITC 2009, IEEE JISIC 2014, NetGames 2015, and EuroGP 2017. He has served as General Chair and TPC Chair in flagship conferences such as ACM SIGCOMM (2021 and 2022) and IEEE INFOCOM (2024), and is Vice Chair of the ACM SIGCOMM Executive Committee. He co-founded the Do IoT fieldlab and the PowerWeb Institute and served on the board of the TU Delft Safety \& Security Institute. Currently, he is co-PI of the Dutch 6G flagship project Future Network Services, where he leads the program line Intelligent Networks.
\end{IEEEbiography}

\end{document}